\soulregister\eqref{7}
\soulregister\cite{7}
\soulregister\mathbf{7}
\newtheorem{prop}{Proposition}
\title{Super-resolving multiple scatterers detection in SAR Tomography assisted by correlation information}
\author[a,*]{Ahmad Naghavi}
\author[a]{Mohammad Sadegh Fazel}
\author[b]{Mojtaba Beheshti}
\author[a]{Ehsan Yazdian}
\affil[a]{Isfahan University of Technology, Department of Electrical and Computer Engineering, Isfahan, Iran}
\affil[b]{Research Institute for Information and Communication Technology, Isfahan University of Technology, Isfahan, Iran}
\begin{document} 
\maketitle

\begin{abstract}
This paper proposes a method for detecting multiple scatterers (targets) in the elevation direction for synthetic aperture radar (SAR) tomography. The proposed method can resolve closely spaced targets through a two-step procedure. In the first step, coarse detection is performed with a successive cancellation scheme in which possible locations of targets are marked. Then, in the second step, by searching in the reduced search space which is finely gridded, the accurate location of the targets is found. For estimating the actual number of targets, a model order selection scheme is used in two cases of known and unknown noise variance. Also, by analytical investigation of the probability of detection for the proposed method, the effect of the influential parameters on the detection ability is explicitly demonstrated. Compared to the super-resolution methods based on compressed sensing (CS), the proposed method has a lower computational cost and higher estimation accuracy, especially at low signal-to-noise ratio regime. Simulation results show the superiority of the proposed method in terms of both 3D scatterer reconstruction and detection ability.
\end{abstract}

\keywords{Synthetic aperture radar tomography (TomoSAR), nonlinear least square (NLS), complexity reduction, detection, model order selection}

{\noindent \footnotesize\textbf{*}Ahmad Naghavi,  \linkable{a.naghavi@ec.iut.ac.ir} }

\begin{spacing}{2}   

\section{Introduction}\label{sec1}
{S}{ynthetic} aperture radar (SAR) facilitates a day-night all-weather two-dimensional imaging for earth observation \cite{moreira2013tutorial}. Space-borne missions such as TerraSAR-X (TSX) \cite{eineder2009spaceborne} and COSMO-SkyMed \cite{budillon2014localization} capture very high resolution (VHR) data stacks. Using three-dimensional SAR imaging, the distribution of scatterers in the elevation direction is also achieved. Synthetic aperture radar interferometry (InSAR) is a powerful tool for height profile estimation but it can not resolve the scatterers in height \cite{bamler1998synthetic}. 

As a complementary method, SAR tomography (TomoSAR) is capable of resolving the scatterers through a synthetic aperture in the elevation direction\cite{reigber2000first}. It retrieves the location of the scatterers and their reflectivity by processing a stack of range-azimuth focused images provided by captured data from slightly different orbits \cite{fornaro2005three}. Monitoring applications of SAR tomography are in two main categories: natural land coverage and urban infrastructure. Furthermore, by using multi-temporal SAR acquisition, TomoSAR has been extended to higher dimensions. Four-dimensional (4D) tomography presents deformation velocity of the scatterers in addition to elevation \cite{lombardini2005differential,fornaro2008four}. By five-dimensional (5D) tomography, thermal expansion of the scatterers is also obtained \cite{monserrat2011thermal,zhu2011let}.

The main processing in TomoSAR is the inversion, which is the extraction of the reflectivity vector of scatterers in elevation direction from the acquired data. This leads to a spectral estimation problem which can be solved by a parametric or non-parametric method. The first non-parametric solution is beamforming \cite{reigber2000first}. Nonlinear least square (NLS), as a parametric method that is accurate but is time-consuming \cite{zhu2010very}. Furthermore, the parametric methods such as NLS and need knowledge about the number of targets \cite{stoica1997introduction} and hence, should employ a model order selection scheme.

In addition to TomoSAR inversion methods using model order selection, detection-based methods have been proposed for scatterers detection \cite{de2009detection,pauciullo2012detection}. Since there are some unknown parameters in the detection process, a common solution is using the generalized likelihood ratio test (GLRT). In \cite{de2009detection}, GLRT is used for single scatterer detection. For the double scatterers case, the solution called sequential GLRT with cancellation (SGLRTC) is proposed which is fast but has low resolution\cite{pauciullo2012detection}. In \cite{budillon2016glrt}, another GLRT-based method called Sup-GLRT has been presented. This method is a high-resolution method and can detect more than two targets at the expense of searching in larger combinatorial spaces. 

Compressed sensing (CS) as a sparse data recovery method, has been widely used in signal processing \cite{donoho2006compressed}. To achieve super-resolution by CS, a limited number of observations suffices, and knowledge of the number of targets is not necessary. After compressed sampling, a reconstruction algorithm is needed to recover the data. Among the several algorithms, $l_1$-norm minimization has received the most attention \cite{chen2001atomic}\cite{tibshirani1996regression}. Despite the high accuracy of $l_1$-norm minimization algorithms, they are computationally complex and suffer from outliers after reconstruction. \cite{eldar2012compressed}\cite{donoho2005stable}. In various recent studies on TomoSAR, $l_1$-norm minimization has been applied to obtain height profile of urban infrastructures \cite{zhu2010tomographic,budillon2011three,zhu2012super}. In \cite{zhu2012super}, the Scale-down by L1 norm Minimization, Model selection, and Estimation Reconstruction (SL1MMER) method was proposed in which, by applying model order selection over the downsized data resulted by CS reconstruction, the number of targets and their locations are obtained.

While super-resolution capability is desirable in every imaging system, in TomoSAR, it is more critical because of the layover phenomenon. When layover occurs, multiple targets in elevation contribute to the same range-azimuth cell. By utilizing tomography, layover can be resolved if the minimum distance between the scatterers is more than one resolution unit in the elevation direction.

In this paper, we propose a computationally-efficient super-resolution method that accurately detects the scatterers through two consecutive steps. First, in the coarse detection step, by exploiting correlation information, the possible locations of targets are characterized. Then in the fine detection step, a combinatorial search is applied for locating the targets. Also, in this step, we used a model order selection scheme for the two cases of known and unknown noise variance. Detection ability and estimation accuracy of the proposed method are used as performance criteria and investigated regarding the related factors. Moreover, by deriving an analytical expression for the probability of detection of the proposed method, we investigate the impact of the influential factors on the detection ability.

The rest of this paper is organized as follows. The system model appears in Section \ref{sec2}. In Section \ref{sec3}, the proposed two-step detection method is presented. The performance criteria of the proposed method are analyzed in Section \ref{sec4}.  Section \ref{sec5} presents the results of TomoSAR experiments over simulated SAR data. Section \ref{sec6} states the advantages of the proposed method in computational cost. Finally, section \ref{sec7} concludes this work.
\section{System Model}\label{sec2}
The multipass SAR geometry is shown in Fig. \ref{fig:tomosar-geometry}. In TomoSAR processing, first a stack of complex datasets are acquired at $N$ orbit positions (orthogonal baselines). Then, range-azimuth compression is applied on the raw data. It is shown that (after some postprocessing), the focused complex image for a range-azimuth pixel at $n$-th pass is \cite{fornaro2005three}
\begin{equation}
\label{1}
g_n=\underset{\Delta_s}\int\pmb{\gamma}(s)\exp(j2{\pi}\xi_ns)ds,\hspace{0.15cm}n=0,\ldots,N-1
\end{equation}
where $\pmb{\gamma}(s)$ is the reflectivity function along the elevation direction $s$, $\xi_n={2b_n}/(\lambda R_0)$ is the spatial frequency, $b_n$ stands for the $n$-th orthogonal baseline relative to the reference (master) acquisition, $\lambda$ is operating wavelength, $R_0$ is the distance from the master baseline to the center of the ground scene and $\Delta_s$ is the elevation extent. Equation \eqref{1} shows a spatial Fourier transform relationship between the target reflectivity in the elevation direction and the  post-processed complex image at $n$-th pass. Discretizing \eqref{1} along $s$ dimension yields the following linear model
\begin{figure}
 \centering
 \includegraphics[width=.65\linewidth]{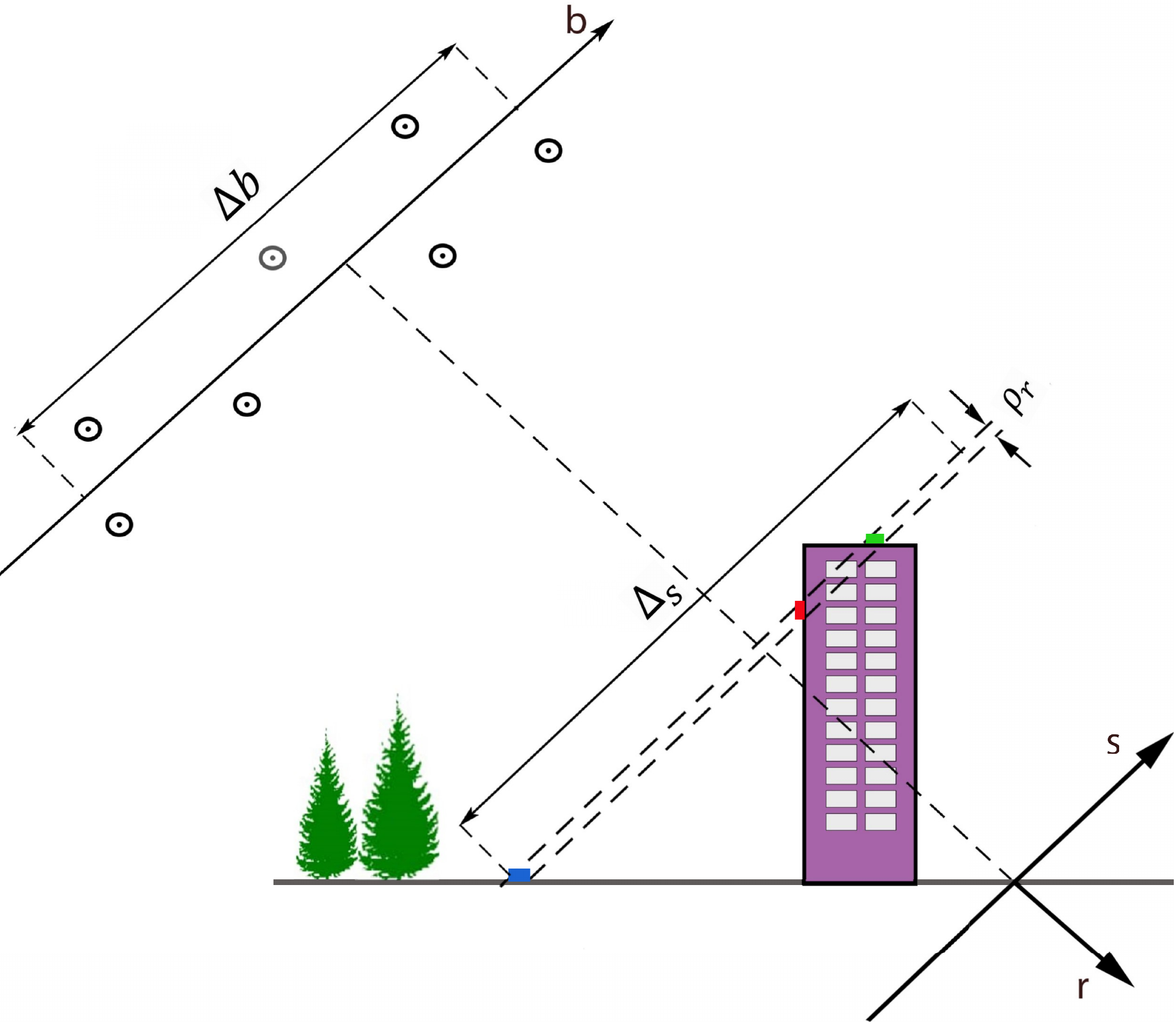}
 \caption{Multipass SAR geometry in the range-elevation $(r,s)$ plane.}
 \label{fig:tomosar-geometry}
\end{figure}
\begin{equation}
\mathbf{g}=\mathbf{A}\boldmath{{\gamma}}+\mathbf{n}
\label{2}
\end{equation} 
where $\mathbf{g}=\left[g_1, g_2,...,g_N\right]^\mathrm{T}$ is the vector of the range-azimuth compressed images obtained at all passes and $\mathrm{(.)^T}$ denotes the transpose operator. Let $M$ be the number of samples (grids) in the elevation direction. $s_m,m=1,2,...M$ is the $m'$th elevation samples and $\pmb{\gamma}=\left[\gamma_1, \gamma_2,\ldots,\gamma_M\right]^\mathrm{T}$ is the reflectivity vector.
The matrix  $\mathbf{A}=[\boldsymbol{a}(s_1),\boldsymbol{a}(s_2),... ,\boldsymbol{a}(s_M)]$ is the $N\times{M}$ steering matrix where $\boldsymbol{a}(s_m)$ is the m'th steering vector whose $n$'th element is $\exp(j2\pi\xi_{n}s_{m})$ and $j=\sqrt{-1}$. The noise term $\mathbf{n}$ is a zero-mean circular complex Gaussian vector with covariance matrix $E{\{\mathbf{n}\mathbf{n}^\mathrm{H}}\}=\sigma^2_n\mathbf{I}_N$, where $\mathbf{I}_N$ is $N\times{N}$ identity matrix.

The basic tomographic inversion method is beamforming \cite{reigber2000first} which is actually the correlation between $\mathbf{g}$ and the steering matrix $\mathbf{A}$. Accordingly, Rayleigh resolution is defined as the minimum separation between two point scatterers which can be discriminated by the beamforming method
\begin{equation}
\rho_{s}=\frac{{\lambda}R_0}{2\Delta b}
\label{3}
\end{equation} 
where $\Delta b$ is the total baseline extent. 
\section[]{Two-Step Multiple Scatterers Detection}\label{sec3}
In SAR tomography, by using an inversion method, all elements of the reflectivity vector are estimated. However, for point scatterers, the objective of TomoSAR is only the detection of non-zero elements of the reflectivity vector. GLRT and SGLRTC are two detectors  \cite{de2009detection,pauciullo2012detection} in which, detection is based on beamforming method or equivalently correlation information. Since the number of passes is usually small, the baseline extent $\Delta b$ is limited, and as a result, the resolution of the SGLRTC detector is low. A super-resolution method suitable for multiple scatterers detection is the NLS method, which is based on maximum likelihood (ML) estimation \cite{kay1993fundamentals}. The NLS method, though an accurate solution, has high computational complexity due to the required exhaustive search \cite{zhu2010very}.

To reduce the computational cost of NLS, a two-step method is proposed called Correlation-Aided NLS (CA-NLS). In its first step, coarse detection is performed through finding possible location of targets. Then by searching in the reduced space obtained in the first step, fine detection is achieved. In the following, the two-step method is presented in detail.
\subsection{Coarse Detection}\label{sec3_1}
In the coarse detection step, we use the correlation information in an iterative routine to characterize the possible location of the targets. To do this, first the correlation between the received data $\mathbf{g}$ and the steering matrix $\mathbf{A}$ is calculated. Then, the target is detected by comparing the maximum power of the correlation vector (after scaling) with a threshold as \cite{conte1995asymptotically,de2009detection}
\begin{equation}
\max_{s_m}\frac{|\boldsymbol{a}^H(s_m)\mathbf{g}|^2}{N\|\mathbf{g_{\perp}}\|^2}\underset{\mathcal{H}_0}{\overset{\mathcal{H}_1}{\gtrless}}T, \hspace{.3cm}m=1,2,...M 
\label{4}
\end{equation}
where $T$ is a threshold adjusted based on a constant false alarm rate (CFAR) scheme and $\mathbf{g_{\perp}}$ is the residual component of $\mathbf{g}$ that is orthogonal to $\boldsymbol{a}(s_m)$
\begin{equation}
\mathbf{g_{\perp}}= \mathbf{g} - \boldsymbol{a}(s_m)[\boldsymbol{a}(s_m)^H\boldsymbol{a}(s_m)]^{-1}\boldsymbol{a}(s_m)^H\mathbf{g}.
\label{5}
\end{equation}

For the multi-target case, the location of the detected target obtained by \eqref{4} does not necessarily coincide with their real location. However, the vicinity area of the detected peak can be characterized as the possible location of a target. This area has the extent of $2\rho_s$ and is centered around the detected peak.

The procedure of coarse detection is shown in Algorithm \ref{alg.CoarseLocalization} which includes two consecutive loops of iteration and decision. In the iteration loop, the index of the dominant components is found using \eqref{4}, and then, by canceling the detected dominant component, the residual term is updated. This routine which resembles orthogonal matching pursuit (OMP) \cite{tropp2007signal}, is successively done until the maximum number of targets $k_\mathrm{max}$ are detected. The updating parameters at $k$-th iteration are as follows: $\mathbf{r}_k$ is the residual term, $p_k$ is the index of the peak point of correlation vector between $\mathbf{r}_{k-1}$ and $\mathbf{A}$, $\Lambda_k$ is the vector of detected peaks indexes, $\Gamma_k$ is the test statistics of \eqref{4} and $supp_{k}$ is the partial support related to the detected peak. In the decision loop, the derived test statistics are compared with the threshold $T$ in reverse order. If in the $k$'th loop $\Gamma_k$ exceeds the threshold, $S$ which is the possible support of the targets will be the union of all $k$ detected partial supports. 
\begin{algorithm}[h!] 
 \caption{Coarse Detection} 
 \label{alg.CoarseLocalization} 
 \begin{algorithmic}[1] 
  
  \STATE $\mathbf{Input}: \mathbf{g},\mathbf{A},\rho_s,k_\mathrm{max},T $
  \STATE Initialization: $k=0,\mathbf{r}_0=\mathbf{g},\Lambda_0=\emptyset,supp_0=\emptyset$
  \STATE -- Iteration loop  \WHILE{$ \hspace{0.15cm} (k \textless k_\mathrm{max} ) $}
  \STATE$k=k+1$  
  \STATE$p_k=arg\max_{i}{\boldsymbol{|}\boldsymbol{a}^H(s_{i})\mathbf{r}_{k-1}\boldsymbol{|}}, \hspace{0.15cm} i=1,2,...,M$
  \STATE$\Lambda_k=\Lambda_{k-1}\bigcup{p_k}$
  \STATE$\mathbf{r}_{k}=\mathbf{g}-\mathbf{A}_{\Lambda_k}(\mathbf{A}^H_{\Lambda_k}\mathbf{A}_{\Lambda_k})^{-1}\mathbf{A}^H_{\Lambda_k}\mathbf{g}$   \STATE$\Gamma_k=\frac{\boldsymbol{|}\boldsymbol{a}^H(s_{p_k})\mathbf{r}_{k-1}\boldsymbol{|}^2}{N.\|\mathbf{r}_{k}\|^2}$
  \STATE$supp_k =\boldsymbol{[}s_{p_k}-\rho_s,s_{p_k}+\rho_s\boldsymbol{]}$
  \ENDWHILE 
  \STATE -- Decision loop
  \WHILE{$k > 0$}
  \IF{$\Gamma_k > T$}
  \STATE break
  \ELSE
  \STATE$k=k-1$ 
  \ENDIF
  \ENDWHILE
  \STATE $\mathbf{Output}$ : {$S = \bigcup\limits^{k}_{i=0} {supp_i} $}
 \end{algorithmic}
\end{algorithm}

For the special case of $k_\mathrm{max}=2$, the coarse detection algorithm presented is similar to the SGLRTC method \cite{pauciullo2012detection}. In our method, however, the support information acquired during the coarse detection step, enables fast detection of targets in the next step.
\subsection{Fine detection}\label{sec3_2}
After identification of the possible location of targets, it is desired to find the accurate location of targets. Assumming presence of $k$ targets, $\Omega_k$ is defined as support of $k$ targets (position of nonzero elements of reflectivity vector), $\pmb{\gamma}_{\Omega_{k}}$ as the related $k$ nonzero elements of the reflectivity vector and  $\mathbf{A}_{\Omega_k}$ is a $N\times{k}$ matrix composed of the $k$ columns of matrix $\mathbf{A}$ whose indexes correspond to ${\Omega_k}$. The NLS estimate of $\Omega_k$ and  $\pmb{\gamma}_{\Omega_{k}}$ are as follows \cite{kay1993fundamentals}
\begin{equation}
\hat{\Omega}_{k}=arg\min_{\Omega_{k}\in{S}}\mathbf{g}^H\Pi^{\perp}_{\Omega_{k}}\mathbf{g}
\label{6}
\end{equation} 
\begin{equation}
\hat{\pmb{\gamma}}_{\Omega_{k}}=(\mathbf{A}^H_{\Omega_{k}}\mathbf{A}_{\Omega_{k}})^{-1}\mathbf{A}^H_{\Omega_{k}}\mathbf{g}.
\label{7}
\end{equation}
where $\Pi^{\perp}_{\Omega_{k}}=\mathbf{I}_N-\mathbf{A}_{\Omega_k}(\mathbf{A}^H_{\Omega_{k}}\mathbf{A}_{\Omega_k})^{-1}\mathbf{A}^H_{\Omega_{k}}$ is the orthogonal complement projection onto the subspace spanned by $\mathbf{A}_{\Omega_k}$. The combinatorial search in \eqref{6} is done in $S$ which is the possible location of targets obtained in previous step. This search space reduction makes accurate localization of targets feasible. For determining the number of targets, the model order slection is required which is done by optimizing a penalized likelihood criteria. Depending on noise variance being known or unknown, the model can be selected in two ways that is presented in the unified form as follows
\begin{equation}
\hat{k}=arg\min_{k}\{J_k\triangleq f(\varepsilon(k))+\mathcal{P}(k)\}
\label{8}
\end{equation} 
where $\varepsilon(k)$ and $f(x)$ are definde as
\begin{equation}
\varepsilon(k)=\min_{\Omega_{k}\in{S}}\{\mathbf{g}^H\Pi^{\perp}_{\Omega_{k}}\mathbf{g}\}.
\label{9}
\end{equation}
\begin{equation}
f(x)=\left\{
\begin{array}{ll}
\frac{x}{\sigma^2_n}, \hspace{1.89cm}  known\hspace{.2cm}\sigma^2_n\\
\\
N\ln(\frac{x}{N}), \hspace{.83cm} unknown\hspace{.2cm}\sigma^2_n.
\end{array}
\right.
\label{10}
\end{equation}
and $\mathcal{P}(k)$ is the penalty term. For the sake of brevity, the details on derivation of \eqref{6}-\eqref{10} are refered in Appendix A. The flowchart of the proposed method is shown in Fig. \ref{fig:total-algorithm}. 

There are various penalty terms, among them, Akaike information criterion (AIC) \cite{akaike1974new} and the Bayesian information criterion (BIC) \cite{schwarz1978estimating} are commonly used. AIC and BIC belong to the family of information-theoretic criteria with the following penalty term \cite{wax1985detection}
\begin{equation}
\mathcal{P}(k)=\eta.K
\label{11}
\end{equation}
where $\eta$ is the penalty factor and $K$ is the number of unknown parameters. For AIC and BIC, the values of the penalty factor are $\eta_{AIC}=1$ and $\eta_{BIC}=0.5\ln{N}$, respectively. Due to model selection bias of AIC for finite samples, its corrected version (AICc) is also suggested where 
\begin{figure*}
 \centering
 \includegraphics[width=0.5\linewidth]{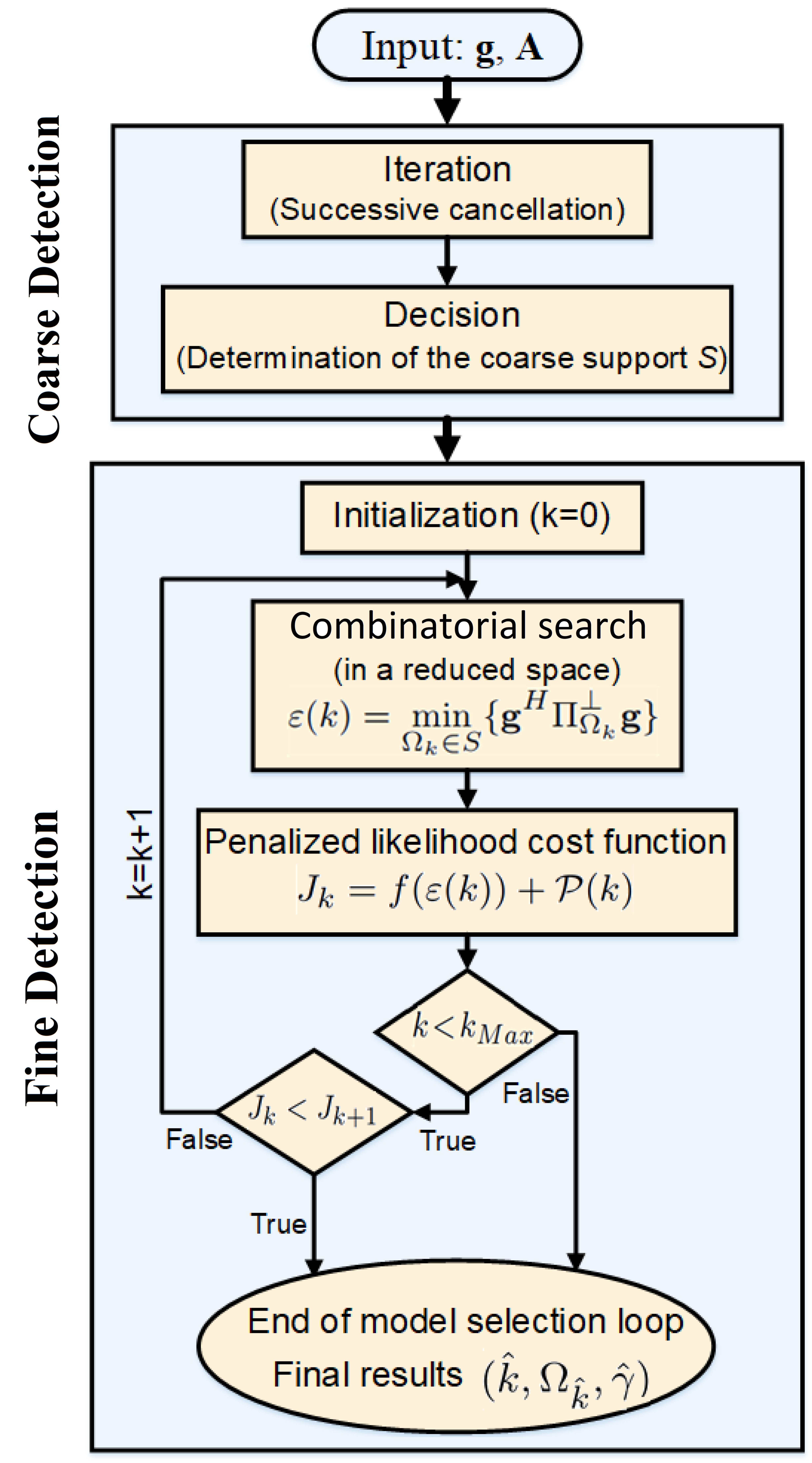}
 \caption{Flowchart of the proposed two-step detector (CA-NLS).}
 \label{fig:total-algorithm}
\end{figure*}
$\eta_{AICc}=\frac{N}{N-K-1}$ \cite{sugiura1978further}. Since there are three unknown parameters for every target (location, amplitude, and phase), the total number of unknown parameters is $K = 3k$. 

Overall, in the fine detection step, first, a combinatorial search is performed for each assumed $k$. Then, the model order is selected via minimization of the cost function \eqref{8} over $k \in [0,k_\mathrm{max}]$. However, if $J_k<J_{k+1}$, $k$ will be selected as the number of targets, and no further search is required for the higher number of targets.
\section{Performance criteria}\label{sec4}
In this section, the performance of the CA-NLS method in multiple scatterers' detection is analyzed. Then, through the simulation of system model \eqref{2}, the efficiency of CA-NLS is compared with the other methods including SGLRTC, SL1MMER, and Sup-GLRT. The first comparing method is a low-resolution method, while the second and third are high-resolution ones. The performance of the methods has been evaluated by two criteria: detection ability and estimation accuracy. 

In this section, we consider the presence of a maximum of two targets per a range-azimuth cell ($k_\mathrm{max}=2$). SNR per scatterer is defined as the square of the amplitude of a scatterer to the power of the noise. In double scatterers scenario, amplitudes of the two scatterers are considered the same and equal to $\sigma_s$. Thus, it is defined as $SNR=\frac{\sigma^2_s}{\sigma^2_n}$. In addition, the distance of targets is shown by $D_s$ and the normalized distance of targets $\alpha$ is defined as $\alpha=\frac{D_s}{\rho_s}$.
The parameters chosen for the simulation are the number of passes $N=20$, baseline extent $\Delta b=903$ m resulting in Rayleigh resolution $\rho_s=26$ m, elevation extent $\Delta_s=360$ m, number of grids $M=234$ or equivalently 17 points per $\rho_s$ and the threshold $T=0.8$ (to have probability of false alarm $P_{FA}=10^{-3}$ for SGLRTC method). 
\subsection{Detection ability}\label{sec4_1}
For the detection study, $\mathcal{H}_i$ is defined as the hypothesis of the presence of $i$ targets in the elevation direction and $\mathcal{D}_k$ as decision about presence of $k$ targets. The probability of detection for the resolution problem is defined as
\begin{equation}
P_{D} = P(\mathcal{D}_2|\mathcal{H}_2).
\label{12}
\end{equation}

Let's define $P_{D1}$ and $P_{D2}$ as the probabilities of detection of the coarse and fine detectors, respectively. As a cascaded detector, it can be shown that the total probability of detection will be the minimum of the two probabilities of detection. Hence for an appropriate threshold-setting, it is recommended to adjust $T$ in the first detector and penalty factor $\eta$ in the second detector to ensure $P_{D1} > P_{D2}$. Otherwise, missed detection in the coarse detection step may result in the null $S$ and the detection will not proceed to the fine step.

To analytically assess the performance of the proposed two-step method, $P_{D1}$ and $P_{D2}$ should be derived. For the sake of simplicity, we emphasize on the derivation of $P_{D2}$ and leave the exact derivation of $P_{D1}$ for future works. However, an upper-bound has been derived for $P_{D1}$ in \cite{pauciullo2012detection} which is valid only for faraway targets. In the rest of this section, to consider only the detection ability of the fine detector, the probability of detection of the coarse detector is assumed to be one i.e. $P_{D1}=1$ which yields $P_D=P_{D2}$. 

According to the definition of probability of detection in \eqref{12} and the model order selection procedure in \eqref{8}, the detection is successful when the cost function of the two-targets case is the least one, i.e. $J_{2} < J_{1}$. By using definition of $J_n$ in \eqref{8} for known noise variance case, we have
\begin{equation}
\frac{{\|\mathbf{g}-\mathbf{A}_{\Omega_{2}}\gamma_{\Omega_{2}}\|}^2}{\sigma_n^2}+\mathcal{P}(2) < \frac{{\|\mathbf{g}-\mathbf{A}_{\Omega_{1}}\gamma_{\Omega_{1}}\|}^2}{\sigma_n^2}+\mathcal{P}(1).
\label{13}
\end{equation}
where each side of \eqref{13} composed of a likelihood term and a penalty term. 

On the left-hand side of \eqref{13}, in the likelihood term, the location of two steering vectors is searched where $\mathbf{g}$ is composed of the noise plus two steering vectors. Under good SNR conditions, the estimated location of targets approximately coincides with their real value. Hence, after the cancellation of steering vectors of the estimated targets, the likelihood term on the left-hand side of \eqref{13} becomes the power of a pure Gaussian noise scaled to the noise variance. Thus, the likelihood term on the left-hand side of \eqref{13} will have a central chi-square distribution with $N$ degrees of freedom 
\begin{equation}
\frac{1}{\sigma_n^2}{\|\mathbf{g}-\mathbf{A}_{\Omega_{2}}\gamma_{\Omega_{2}}\|}^2\approx\frac{1}{\sigma_n^2}{\|\mathbf{n}\|}^2 \sim \chi_{N}^2.
\label{14}
\end{equation}

On the right hand side of \eqref{13}, in the likelihood term, $\mathbf{g}$ is composed of the noise plus two steering vectors which is subtracted from one steering vector related to the falsely detected target. Consequently, the likelihood term on the right hand side of \eqref{13} becomes the power of a pure Gaussian noise plus an additional term scaled to the noise variance. As a result, the likelihood term on right hand side of \eqref{13} will have a noncentral chi-square distribution with $N$ degrees of freedom
\begin{equation}
\frac{1}{\sigma_n^2}{\|\mathbf{g}-\mathbf{A}_{\Omega_{1}}\gamma_{\Omega_{1}}\|}^2\approx\frac{1}{\sigma_n^2}{\|\mathbf{n+r}\|}^2 \sim \chi_{N}^2(\lambda_r)
\label{15}
\end{equation}
where $\mathbf{r}$ is residual term added to Gaussian noise and $\lambda_r$ is noncentrality of the above chi-square distribution which are defined respectively as follows
\begin{equation}
\mathbf{r} = \sigma_s\boldsymbol{[}\boldsymbol{a}(s_{M_1}) + e^{j\Delta{\phi}}\boldsymbol{a}(s_{M_2}) - \mathbf{\hat{\gamma_p}}\boldsymbol{a}(s_p)\boldsymbol{]}
\label{16}
\end{equation}
\begin{equation}
\lambda_{r} =\frac{\|\mathbf{r}\|^2}{\sigma_n^2}
\label{17}
\end{equation} 
where $s_{M_1}$ and $s_{M_2}$ are real locations of targets, $\Delta{\phi}$ is their phase difference, $s_p$ is the estimated single-target location and $\mathbf{\hat{\gamma_p}}$ is the estimated reflectivity of the single-target scaled to $\sigma_s$. By rearranging \eqref{13} and using \eqref{14},\eqref{15}, we obtain
\begin{equation}
\frac{1}{\sigma_n^2}({{\|\mathbf{n+r}\|}^2}-{\|\mathbf{n}\|}^2) > \mathcal{P}(2) - \mathcal{P}(1).
\label{18}
\end{equation}

According to the ITC penalty in \eqref{11}, the above inequality can be rewritten as
\begin{equation}
\frac{1}{\sigma_n^2}({{\|\mathbf{n+r}\|}^2}-{\|\mathbf{n}\|}^2) > 3\eta.
\label{19}
\end{equation}

The following proposition shows that the left hand side of \eqref{19} has a Gaussian distribution.
\begin{prop}
 \label{proposition 1}
 : Let $X$ be a random variable with central chi-square distribution $\chi^2_N$ and $Y$ a random variable with noncentral chi-square distribution $\chi^2_N(\lambda_r)$ with $N$ degrees of freedom, both built by a common circular Gaussian random vector $\mathbf{n}$. The distribution of difference variable $Z=Y-X$ will be Gaussian with the mean and variance values equal to $\lambda_r$ and $2\lambda_r$, respectively.
 
\end{prop}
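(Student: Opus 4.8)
The plan is to unwind the representations behind $X$ and $Y$ so that $Z$ collapses to an affine function of a single real Gaussian functional of $\mathbf{n}$, and then to read off its first two moments. From \eqref{14} and \eqref{15} the two statistics are realized as $X=\|\mathbf{n}\|^2/\sigma_n^2$ and $Y=\|\mathbf{n}+\mathbf{r}\|^2/\sigma_n^2$, where $\mathbf{r}$ is the (deterministic) residual vector of \eqref{16}. Expanding the Hermitian norm, $\|\mathbf{n}+\mathbf{r}\|^2=\|\mathbf{n}\|^2+2\,\mathrm{Re}(\mathbf{r}^H\mathbf{n})+\|\mathbf{r}\|^2$, so the quadratic parts cancel in $Z=Y-X$ and
\[
Z=\frac{2}{\sigma_n^2}\,\mathrm{Re}(\mathbf{r}^H\mathbf{n})+\frac{\|\mathbf{r}\|^2}{\sigma_n^2}.
\]
Since $\mathrm{Re}(\mathbf{r}^H\mathbf{n})$ is a real linear combination of the jointly Gaussian real and imaginary parts of $\mathbf{n}$, it is Gaussian, and adding the deterministic constant $\|\mathbf{r}\|^2/\sigma_n^2$ keeps it Gaussian; this already pins down the distributional form, and only the mean and variance remain.

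For the moments I would invoke the circular (proper) Gaussian hypothesis on $\mathbf{n}$, i.e. $E\{\mathbf{n}\}=\mathbf{0}$, $E\{\mathbf{n}\mathbf{n}^H\}=\sigma_n^2\mathbf{I}_N$ and $E\{\mathbf{n}\mathbf{n}^T\}=\mathbf{0}$. The mean is immediate: $E\{\mathrm{Re}(\mathbf{r}^H\mathbf{n})\}=0$, hence $E\{Z\}=\|\mathbf{r}\|^2/\sigma_n^2=\lambda_r$ by the definition \eqref{17}. For the variance, set $w=\mathbf{r}^H\mathbf{n}$ and use the identity $\mathrm{Re}(w)^2=\tfrac12\bigl(|w|^2+\mathrm{Re}(w^2)\bigr)$: taking expectations, $E\{|w|^2\}=\mathbf{r}^H E\{\mathbf{n}\mathbf{n}^H\}\mathbf{r}=\sigma_n^2\|\mathbf{r}\|^2$ while $E\{w^2\}=\mathbf{r}^H E\{\mathbf{n}\mathbf{n}^T\}\overline{\mathbf{r}}=0$, so $\mathrm{Var}\{\mathrm{Re}(w)\}=\tfrac12\sigma_n^2\|\mathbf{r}\|^2$. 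Therefore $\mathrm{Var}\{Z\}=(4/\sigma_n^4)\cdot\tfrac12\sigma_n^2\|\mathbf{r}\|^2=2\|\mathbf{r}\|^2/\sigma_n^2=2\lambda_r$, which is the claimed value.

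The one step that genuinely needs care — and which I expect to be the main obstacle if stated loosely — is the factor $\tfrac12$ relating $\mathrm{Var}\{\mathrm{Re}(w)\}$ to $E\{|w|^2\}$: this is precisely where circularity of $\mathbf{n}$ is used, through $E\{w^2\}=0$, and skipping it would produce the wrong variance. I would also stress in the write-up that $X$ and $Y$ are strongly dependent (built from the same $\mathbf{n}$), so the Gaussianity of $Z$ is not a generic ``difference of chi-squares'' phenomenon; it relies on the specific coupling that annihilates the $\|\mathbf{n}\|^2$ terms. Everything else — the norm expansion and the two inner-product expectations — is routine bookkeeping.
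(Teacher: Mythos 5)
Your proposal is correct and follows essentially the same route as the paper's Appendix B: represent $X$ and $Y$ through the common vector $\mathbf{n}$, expand the Hermitian norm so the $\|\mathbf{n}\|^2$ terms cancel, conclude Gaussianity of $Z$ from linearity, and read off mean $\lambda_r$ and variance $2\lambda_r$. The only (cosmetic) difference is in the variance step, where you use the identity $\mathrm{Re}(w)^2=\tfrac12(|w|^2+\mathrm{Re}(w^2))$ together with the vanishing pseudo-covariance $E\{\mathbf{n}\mathbf{n}^T\}=\mathbf{0}$, whereas the paper expands into real and imaginary components each of variance $\sigma_n^2/2$ --- both arguments invoke circularity at exactly the same point and yield the same factor of $\tfrac12$.
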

\begin{proof}
 See Appendix B.
\end{proof}

As a result, $P_{D}$ can be derived as
\begin{align}
P_{D} &= P(Z>3\eta) \notag\\
&=\frac{1}{\sqrt{2\pi\sigma^2_Z}}\int_{3\eta}^{\infty}exp(-\frac{(Z-m_Z)^2}{2\sigma^2_Z})dZ \notag\\
&=Q(\frac{3\eta-m_Z}{\sigma_Z}) \notag\\
&=Q(\frac{3\eta}{\sqrt{2\lambda_r}}-\sqrt{\frac{\lambda_r}{2}})
\label{21}
\end{align}
where $Q(x) = (1/\sqrt{2\pi})\int_{x}^{\infty}exp^{-t^2/2}dt$ is the $Q$-function. Also, the noncentrality term in \eqref{17} can be expressed in terms of other parameters as
\begin{equation}
\lambda_r = N. SNR.\vartheta(\alpha,\Delta{\phi})
\label{22}
\end{equation}           
where $\vartheta(.)$ is a function of normalized distance of two targets $\alpha$ and their phase difference $\Delta{\phi}$. Derivation of $s_p$, $\mathbf{\hat{\gamma_p}}$ and $\lambda_r$ is given in Appendix C.

\begin{figure}
 \centering
 \includegraphics[width=0.70\linewidth]{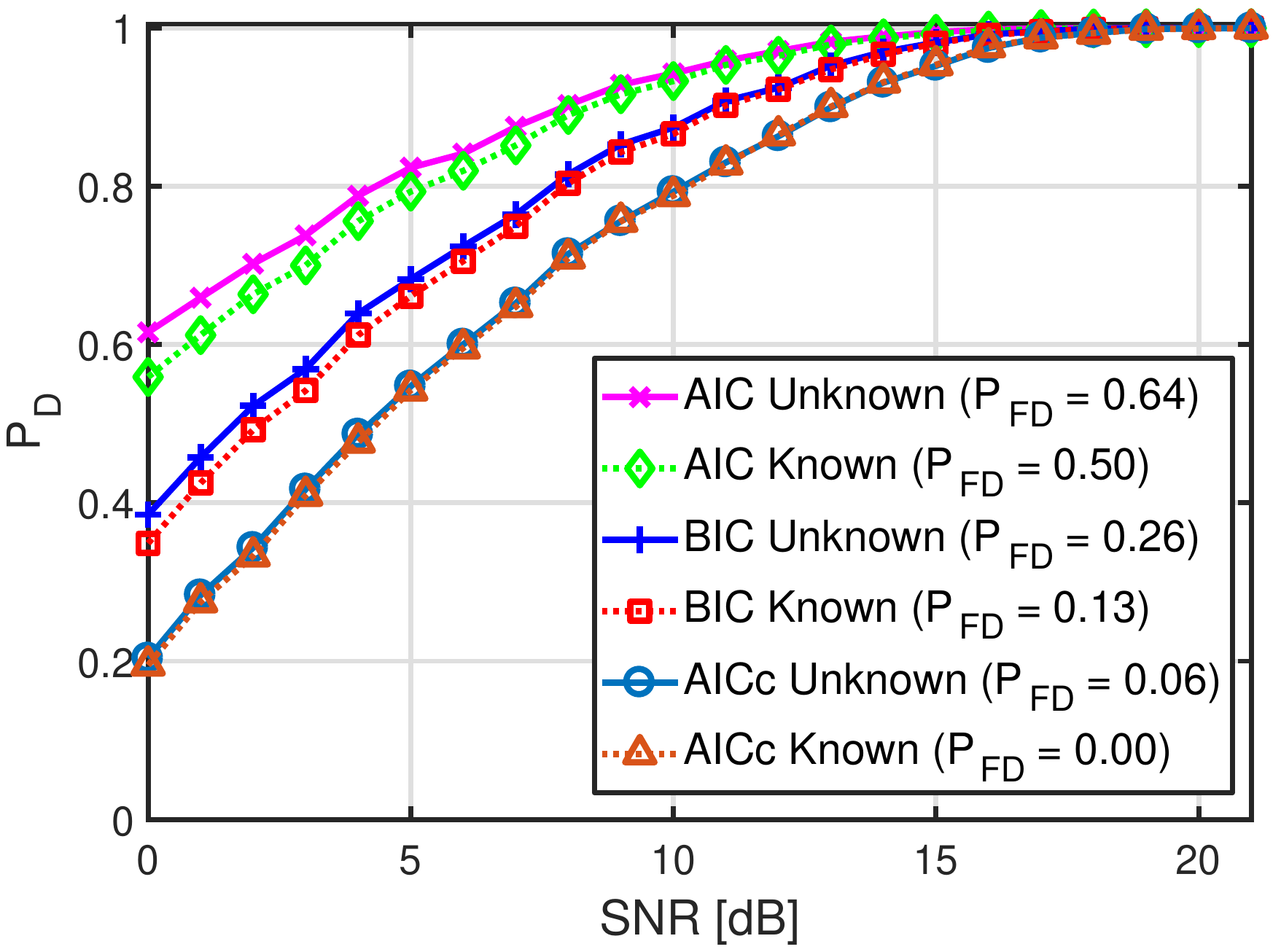}
 \caption{$P_D$ versus SNR for CA-NLS method respecting the penalty term used for model order selection. The normalized distance of targets is $\alpha=\frac{1}{2}$.}
 \label{fig:aicbicunknown}
\end{figure}

Due to the complexity of $\vartheta(.)$, this two-dimensional function is obtained by numerical calculation whose result is sketched in Fig. \ref{fig:vartheta}. It can be observed that for each value of $\Delta{\phi}$, the function $\vartheta(\alpha)$ is increasing with respect to $\alpha$ where $0<\alpha<1.34$. This range of $\alpha$ is reasonable as the relationships led to \eqref{22} are valid for closely-spaced targets. On the other hand, the effect of $\Delta{\phi}$ on $\vartheta(.)$ is also demonstrated in Fig. \ref{fig:vartheta}. By looking at the results, it is seen that the least probability of detection occurs in $\Delta{\phi}=0$.  

According to \eqref{22}, by increasing $N.SNR$ and $\alpha$, the noncentrality $\lambda_r$ increases. Since $Q(.)$ is a decreasing function with respect to its argument, increasing of $\lambda_r$ leads to increase of $P_{D}$. As a result, $N.SNR$ and $\alpha$ have increasing effects on $P_{D}$. The other influential parameter on $P_{D}$ is the penalty factor $\eta$. From \eqref{22}, it is evident that increase of $\eta$ leads to decrease of $P_{D}$. 

In the first numerical experiment, different cases of penalized cost function \eqref{8} have been compared. These cases are related to the availability of the information about the noise variance (known or unknown) and the penalty term used (BIC, AIC, and AICc). In Fig. \ref{fig:aicbicunknown}, $P_D$ of the different cases is sketched versus SNR. Also for these cases, the average probability of false detection $P_{FD}$ (i.e. $ P(\mathcal{D}_2|\mathcal{H}_1)$) is measured and has been indicated in the legend of the figure. 
\begin{figure}
 \centering
 \includegraphics[width=0.70\linewidth]{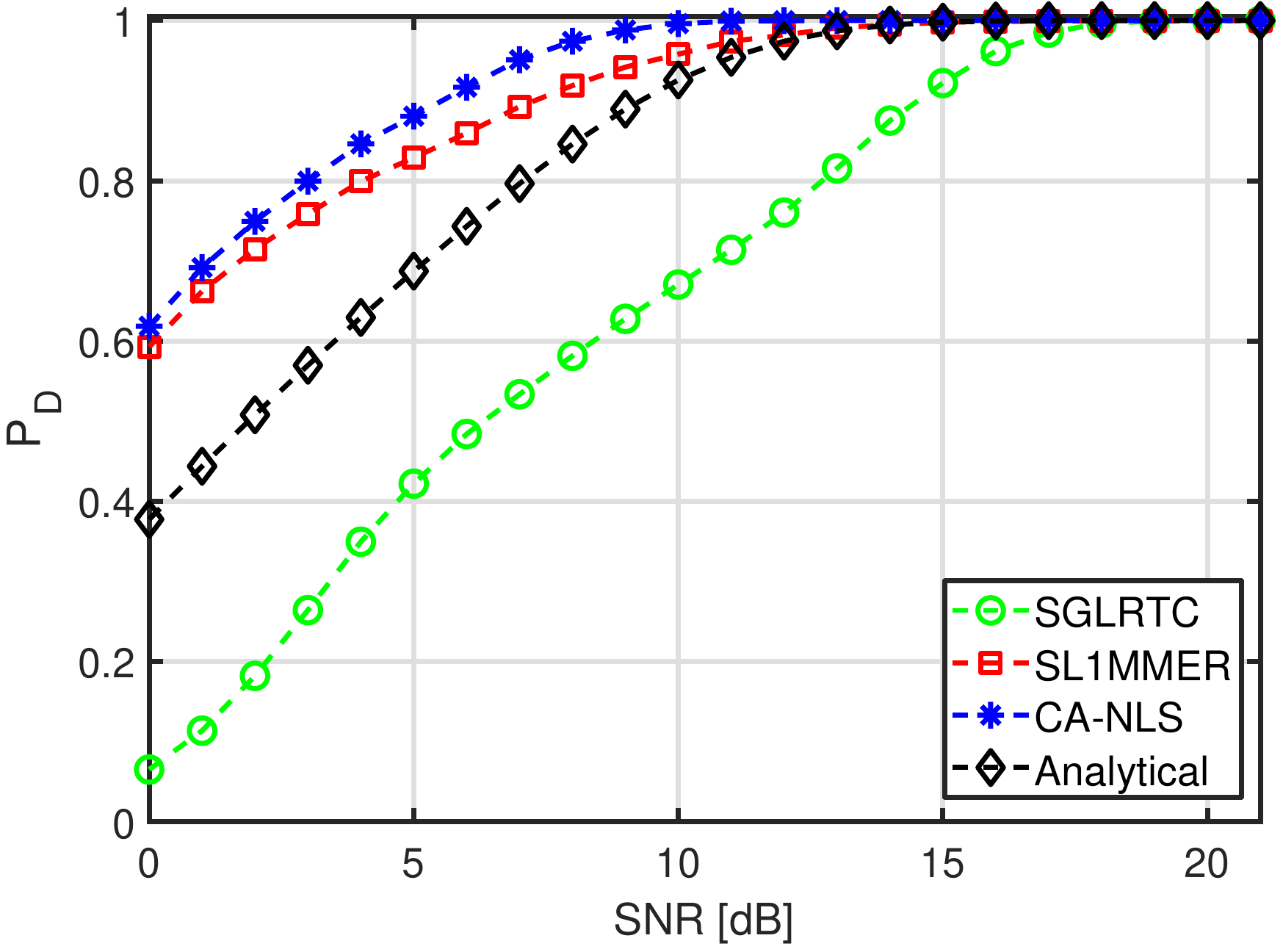}
 \caption{$P_D$ as a function of SNR for SGLRTC (green circles), SL1MMER (red squares), CA-NLS (blue stars) and analytical approximation (black diamonds), for $N=20$ and normalized distance of targets $\alpha=\frac{1}{2}$. }
 \label{fig:pdvssnr}
\end{figure} 

It is observed that the probability of detection in the case of unknown noise variance is slightly superior to the known noise variance one, but its average $P_{FD}$ is also higher. Besides, for AIC, $P_D$ and average $P_{FD}$ are higher comparing to BIC. For localization applications, when $P_{FD}$ increases, most of the single point scatterers are detected as two-point scatterers which makes the final three-dimensional SAR image different from reality. As a result, to maintain image quality, a trade-off between $P_D$ and $P_{FD}$ should be considered in designing the detector. 

For different methods, $P_D$ is shown versus SNR and $\alpha$ in Figs. \ref{fig:pdvssnr} and \ref{fig:pdvsspc}, respectively. Moreover, the analytical approximation of $P_D$ for CA-NLS (mentioned in \eqref{21}-\eqref{22}) has been plotted. For a fair comparison with the simulated methods using random $\Delta{\phi}$, the analytical function $\vartheta(\alpha,\Delta{\phi})$ is integrated over $\Delta{\phi}\in{[-\pi,\pi]}$ .To select the model order, the known noise variance case with BIC penalty term is used. Also, for the SL1MMER method, unconstraint form of Basis Pursuit DeNoising (BPDN) is implemented \cite{chen2001atomic} as follows
\begin{figure}
 \centering
 \includegraphics[width=0.70\linewidth]{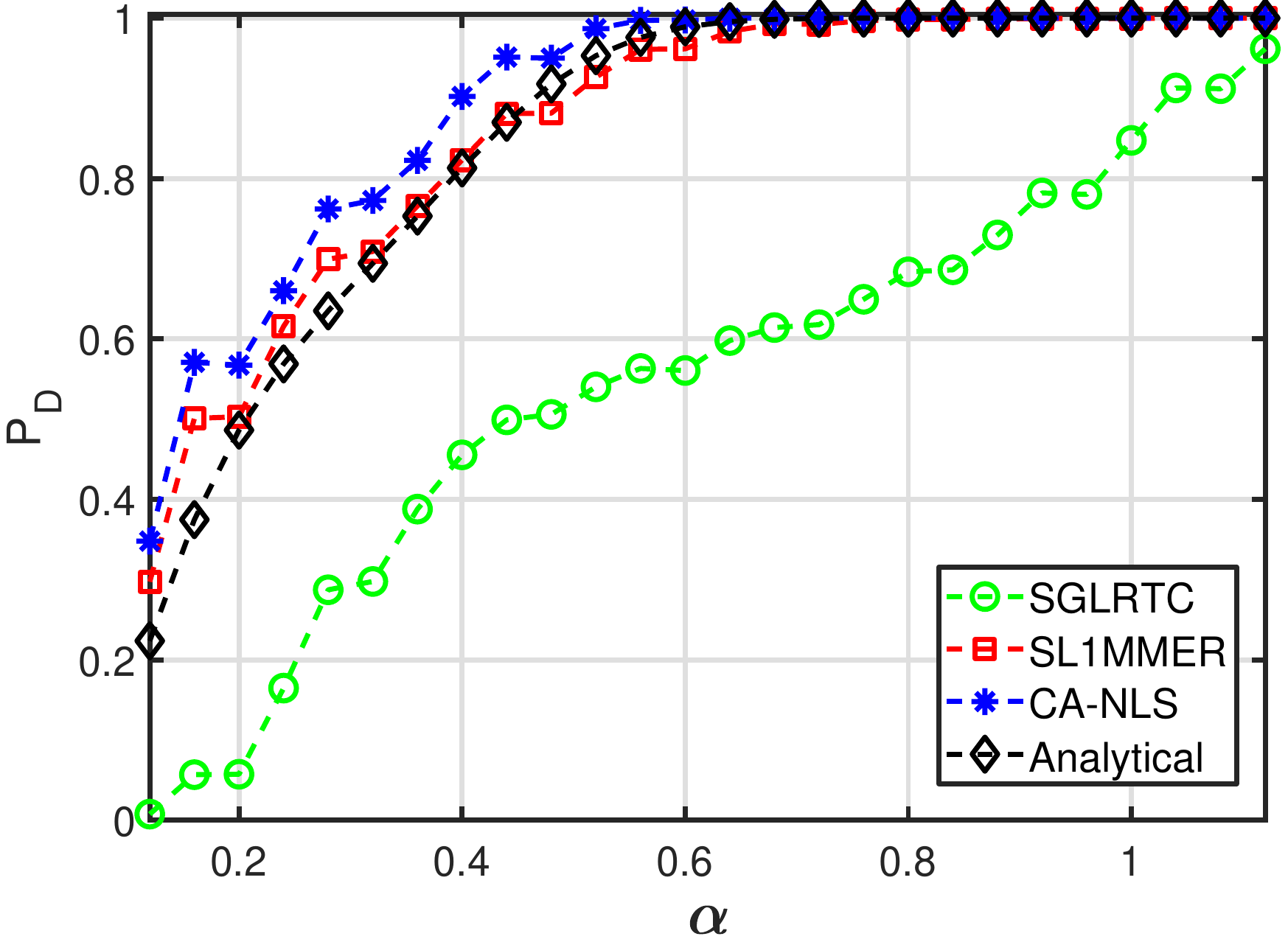}
 \caption{$P_D$ as a function of normalized distance of targets $\alpha$ for SGLRTC (green circles), SL1MMER (red squares), CA-NLS (blue stars) and analytical approximation (black diamonds), for $N=20$ and $SNR = 9\hspace{.1cm}dB$.}
 \label{fig:pdvsspc}
\end{figure}
\begin{equation}
\hat{\mathbf{\gamma}}=arg\min_{\mathbf{\gamma}}\{\|\mathbf{g}-\mathbf{A}\mathbf{\gamma}\|_2^2 + \lambda_{k}\|\gamma\|_1\} 
\label{23}
\end{equation}
where $\lambda_{k}$ is an adjusting parameter which trades off between reconstruction fidelity and sparsity.  
The convex minimization problem in \eqref{23} was solved by the Matlab CVX package \cite{grant2014cvx}. 

As expected, the results show that the probability of detection rises by increasing the SNR and distance of targets. It is observed that CA-NLS has the highest probability of detection, SL1MMER is close to but lower than the CA-NLS, and SGLRTC has the least $P_D$. Also, average $P_{FD}$ of the methods is obtained as 0.002, 0.03, and 0.05 for SGLRTC, CA-NLS, and SL1MMER respectively. These results show the advantage of CA-NLS to SL1MMER in having both higher $P_D$ and lower $P_{FD}$. Furtherly, when SNR is high enough ($SNR>9$ dB), the  analytical expression \eqref{21} is close to the numerical result of CA-NLS.  
\begin{figure}
 \centering
 \includegraphics[width=0.70\linewidth]{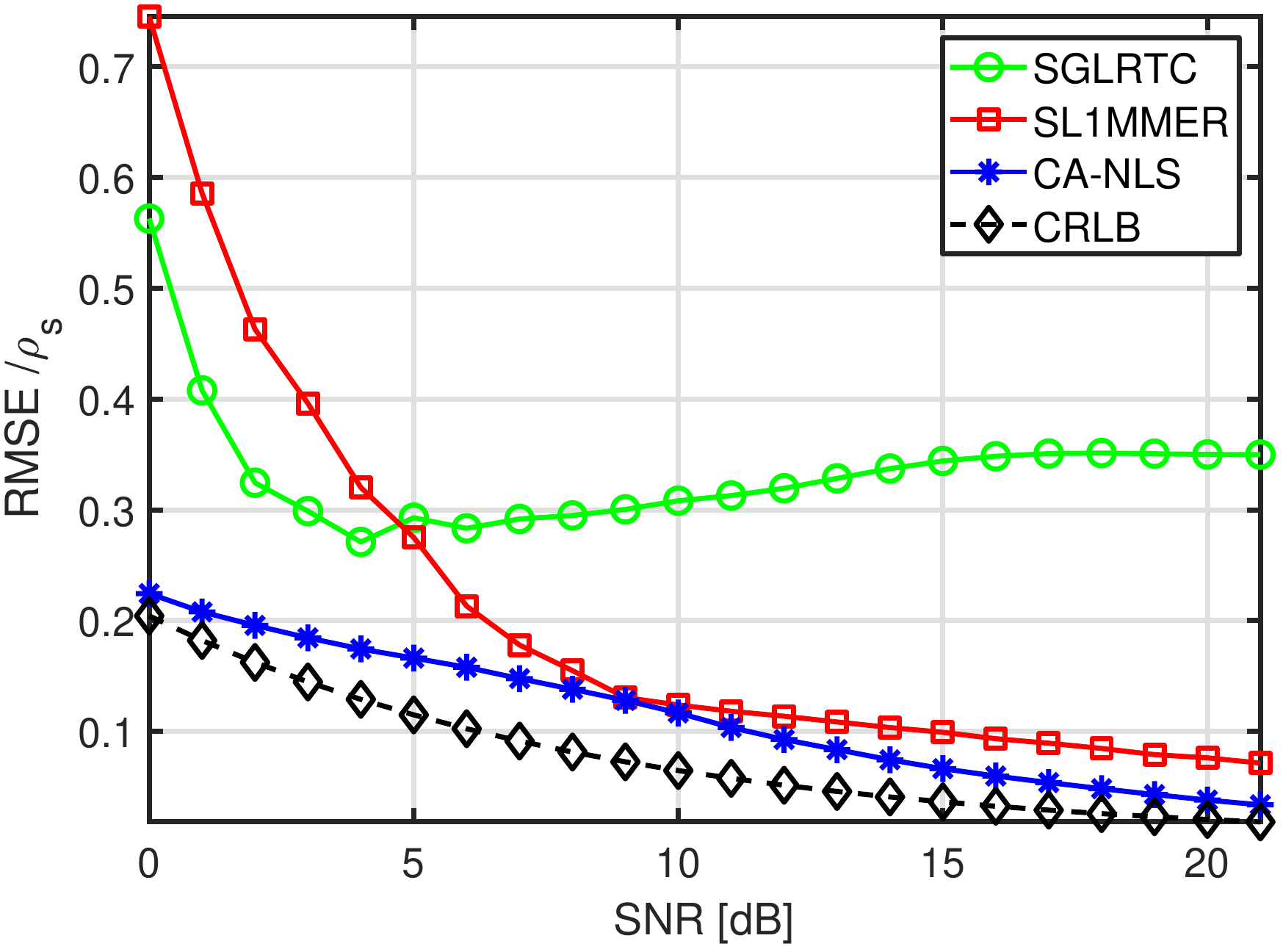}
 \caption{RMSE of estimated locations of targets versus $SNR$ for SGLRTC (green circles), SL1MMER (red squares), CA-NLS (blue stars) and square root of CRLB (black diamonds), for $N=20$ and normalized target's distance $\alpha=\frac{1}{2}$.}
 \label{fig:rmsevssnr}
\end{figure}
\subsection{Estimation accuracy}\label{sec4_2}
For evaluation of the estimation accuracy, the location of targets is considered where the reflectivity is not of concern for localization purpose. To have a criterion for the estimatin accuracy, Cramer Rao lower bound (CRLB) is used as an asymptotic bound for variance of an unbiased estimator. In SAR tomography, various studies have been conducted for obtaining CRLB \cite{de2009detection,zhu2012super}. Associated CRLB for the one-target and two-target cases are respectively as follows \cite{lee1992cramer,swingler1993frequency}
\begin{equation}
\hspace{1.1cm} CRLB_{1} = \frac{3}{2\pi^2}(\frac{\rho_s}{\sqrt{N.SNR}})^2, \hspace{1.03cm} k=1\\
\label{24}
\end{equation}
\begin{equation}
\hspace{1.1cm} CRLB_{2} \approx CRLB_{1}.\upzeta(\alpha,\Delta{\phi}), \hspace{0.95cm} k=2\\
\label{25}
\end{equation}  
where $\upzeta(.)$ is the normalized CRLB which is a function of $\alpha$ and $\Delta{\phi}$ with a value more than one such that $CRLB_2>CRLB_1$ \cite{swingler1993frequency,zhu2012super}. Also it is shown that the most estimation error occures at $\Delta{\phi}=0$ \cite{zhu2012super}. Since multiple snapshots are available in the real scenario, the phase difference of scatterers becomes random in nature. In this case, a $\Delta{\phi}$-independent expression is derived for normalized CRLB \cite{lee1992cramer,swingler1993frequency}
\begin{equation}
\upzeta(\alpha)\approx \max\{\frac{15}{\pi^2}\alpha^{-2},1\}
\label{26}
\end{equation} 
\begin{figure}
 \centering
 \includegraphics[width=0.70\linewidth]{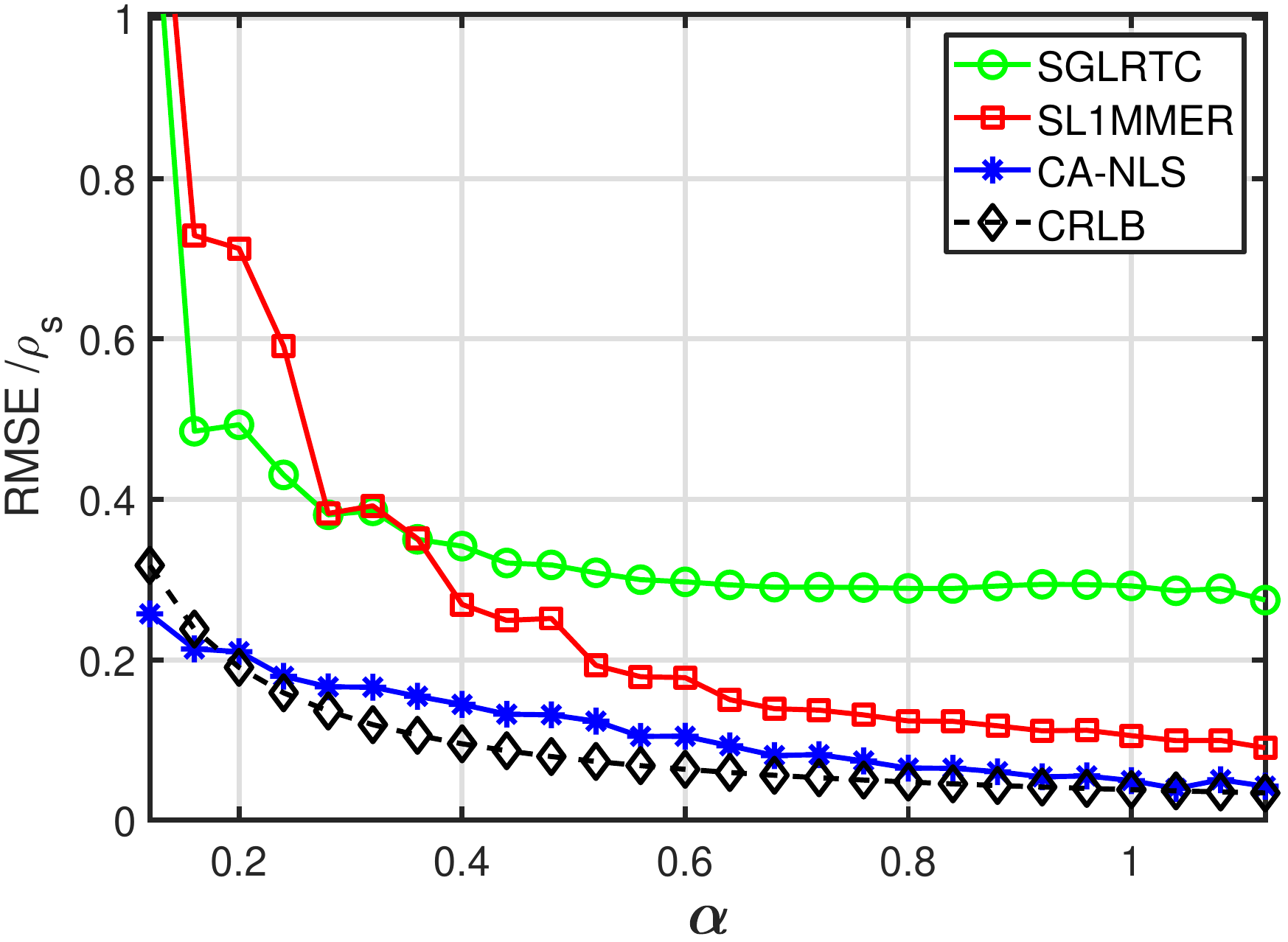}
 \caption{RMSE of estimated locations of targets versus normalized distance of targets $\alpha$ for SGLRTC (green circles), SL1MMER (red squares), CA-NLS (blue stars) and square root of CRLB (black diamonds), for $N=20$ and $SNR = 9\hspace{.1cm}dB$.}
 \label{fig:rmsevsspc}
\end{figure}

In Fig. \ref{fig:rmsevssnr}, RMSE of different methods is sketched versus SNR. To measure the estimation error, root mean square error (RMSE) of the location of point targets is considered
\begin{equation}
RMSE=\sqrt{\frac{1}{N_{Det}}\sum\limits_{i=1}^{N_{Det}}{\frac{(s_{M_1}-\hat{s}_{M_{1i}})^2+(s_{M_2}-\hat{s}_{M_{2i}})^2}{2}}}
\label{27}
\end{equation}
where $N_{Det}$ is the total number of the successful detections, $s_{M_1},s_{M_2}$ are real positions of targets in elevation and $\hat{s}_{M_{1i}},\hat{s}_{M_{2i}}$ are estimated positions of two targets at $i$-th successful detection. It should be noted that in the following results, RMSE of different methods together with the square root of CRLB is scaled to Rayleigh resolution $\rho_s$. It is seen that CA-NLS has the least estimation error and the error decreases by increasing SNR. For SGLRTC, it has a nearly constant error versus SNR which is higher than that of other methods. Similar to CA-NLS, the error of SL1MMER decreases by increasing SNR which is close to that of CA-NLS. However, the SL1MMER method has a high estimation error at lower SNR values. Hence, adequate SNR is required for using the SL1MMER method, while CA-NLS performs well even at low SNR conditions.

In the next experiment, RMSE is measured versus normalized distance of targets $\alpha$. As seen in Fig .\ref{fig:rmsevsspc}, the estimation error decreases by increasing the distance of targets. The performance of CA-NLS is superior to those of SL1MMER and SGLRTC. The error of CA-NLS and SL1MMER decreases by increasing $\alpha$ and approach zero, but for SGLRTC the error is nearly constant for $\alpha<1.5$.
\section{Experiments on simulated SAR data}\label{sec5}
In this section, the performance of different detection methods is measured using simulated SAR data. To do this, two types of experiments have been performed: layover separation and three-dimensional object reconstruction. Due to working in the real scenario, when the noise variance is not known, the unknown noise variance case is used in \eqref{8} for model selection.

\subsection{Layover separation}\label{subsec4_4_1}
In this experiment, two scatterers are considered, which are located on the same range-azimuth cell, one on the ground
and the other on the building facade. The aim is to measure the separability of the targets when their spacing varies from low to high. This type of evaluation has been previously used in \cite{zhu2010tomographic,schmitt2014maximum}. The simulation parameters are shown in Table \ref{tab:SimParam}. For preparing the tomographic stack, Range-Doppler Algorithm (RDA) has been used for focusing SAR images. Next, coregistration and deramping have been applied to the focused images.
\begin{table}[ht]
 \caption{SAR Simulation Parameters.} 
 \label{tab:SimParam}
 \begin{center}       
  \begin{tabular}{|l||l|} 
   \hline
   \rule[-1ex]{0pt}{3.5ex}  Parameter & Value \\
   \hline\hline
   \rule[-1ex]{0pt}{3.5ex}  Platform velocity\, (m/s) & 200 \\
   \hline
   \rule[-1ex]{0pt}{3.5ex} Carrier frequency\, (GHz) & 4.5 \\
   \hline
   \rule[-1ex]{0pt}{3.5ex} Pulse repetition frequency (PRF)\,(Hz) & 300 \\
   \hline
   \rule[-1ex]{0pt}{3.5ex} Range distance to center of scene (km)  & 20 \\
   \hline
   \rule[-1ex]{0pt}{3.5ex} Number of baselines (N) & 24 \\
   \hline
   \rule[-1ex]{0pt}{3.5ex} SNR (dB) & 9 \\
   \hline   
  \end{tabular}
 \end{center}
\end{table} 
\begin{figure}[ht]
 \centering
 \includegraphics[width=1\linewidth]{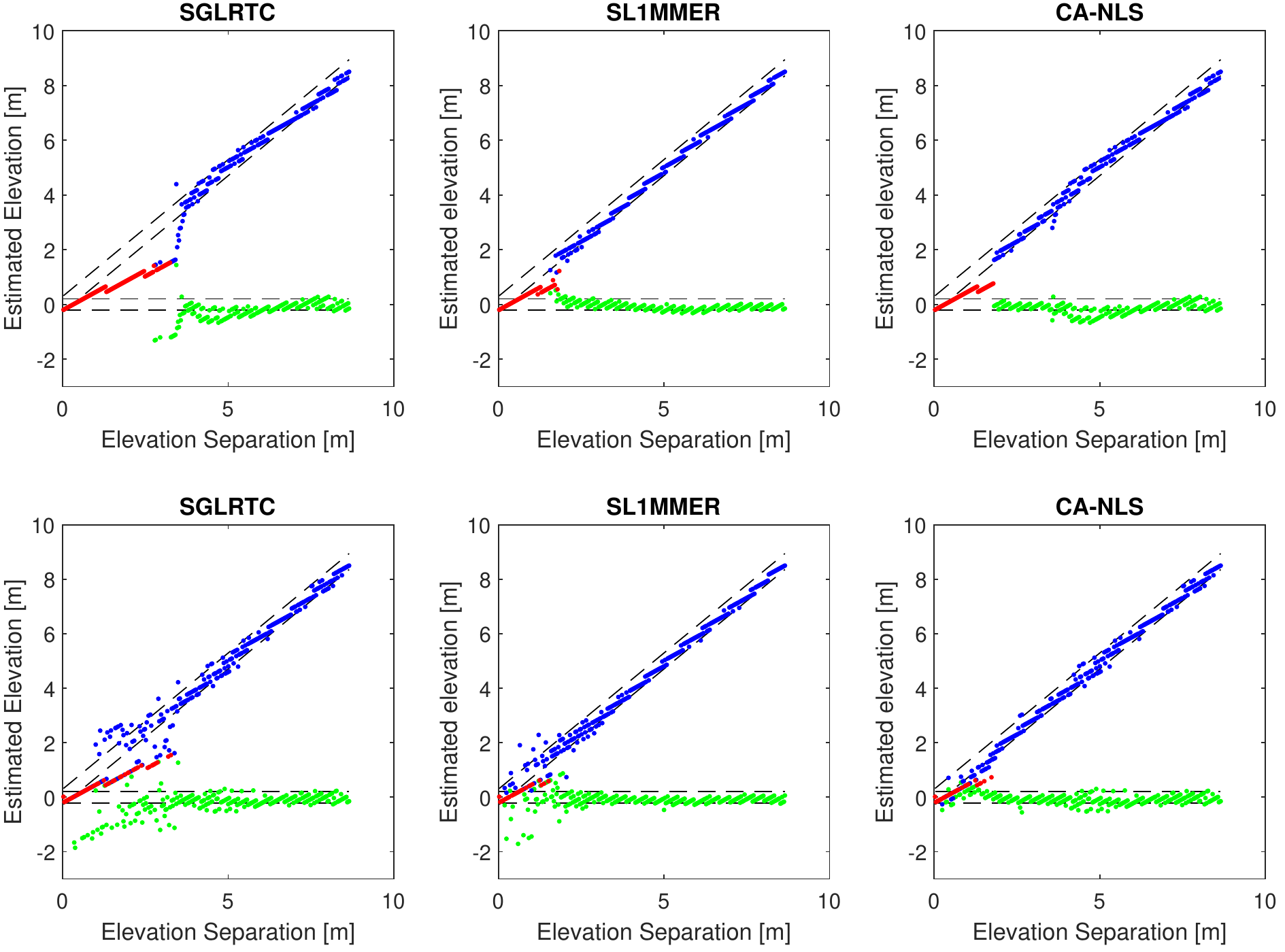}
 \caption{ Estimated elevations of double scatterers located on ground and facade at a linear increasing spacing. The sub-figures (a), (c) are for SGLRTC, (b), (d) are for SL1MMER and (c), (e) are for CA-NLS. The top row indicates targets with equal phase $\Delta{\phi}=0$ and the bottom row shows targets with unequal phases $\Delta{\phi}\neq0$. (Green and blue points indicate detected double scatterers and the red points indicate unresolved single scatterers).}
 \label{fig:sarsimulation}
\end{figure}

The simulation result is shown in Fig. \ref{fig:sarsimulation}. The horizontal axis shows targets on the ground, and the diagonal line belongs to the second targets on the facade. Two parallel dashed lines (margin lines) show $\pm3$ times of square root of $CRLB_1$. The top subfigures are results of experiments done under $\Delta{\phi} = 0$ condition, and the bottom ones are the result of the uniformly random phase difference ones. In this figure, the number of red points indicates the number of missed detections. The deviation of detected points (blue or green) from margin lines indicates the amount of estimation error. 
\begin{figure}[ht]
 \centering
 \includegraphics[width=.9\linewidth]{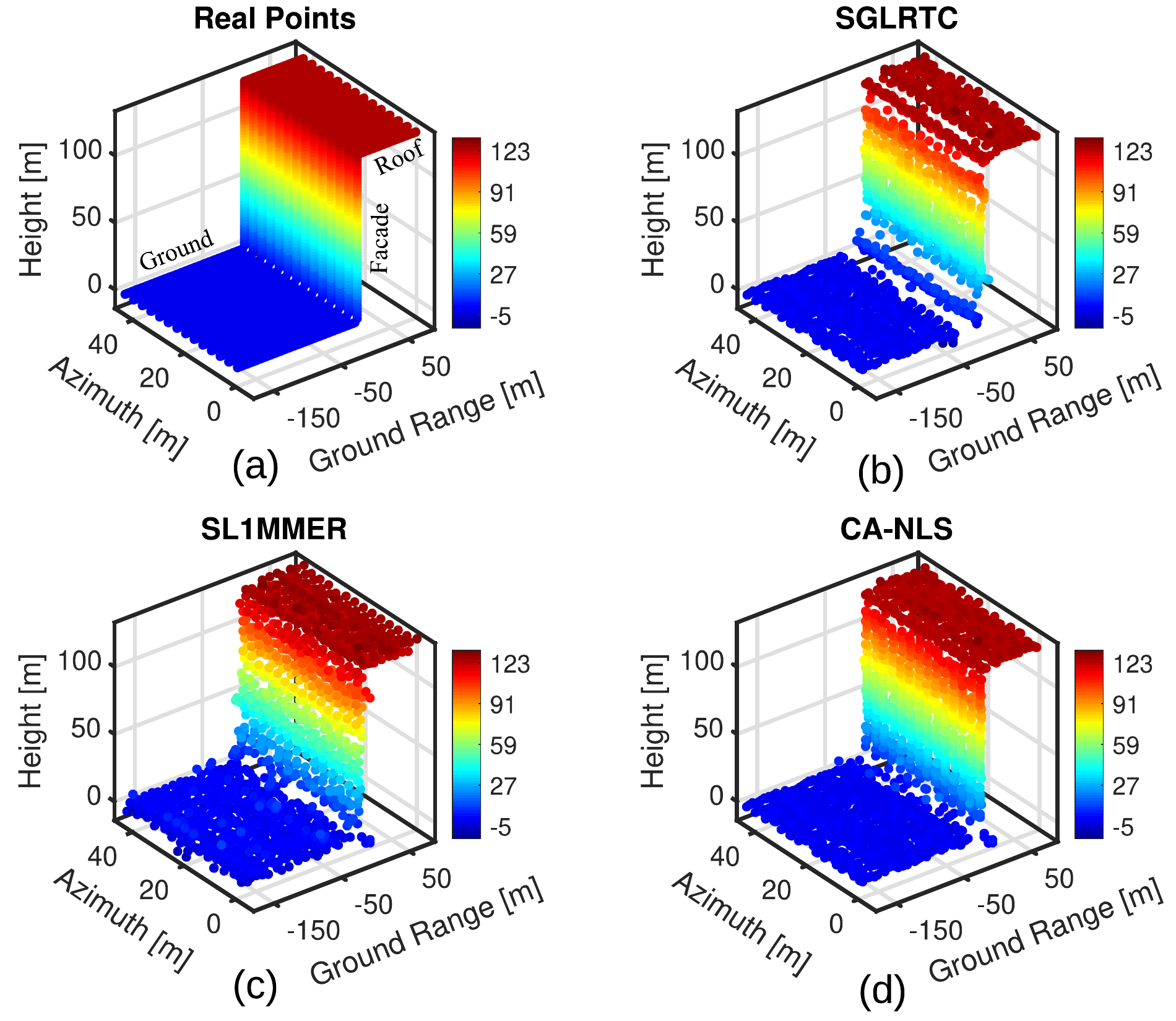}
 \caption{Three-dimensional reconstruction of a synthesized structure by tomographic inversion, (a) Real points (b) SGLRTC (c) SL1MMER and (d) CA-NLS.}
 \label{fig:synthesized3dtomo}
\end{figure}
\begin{figure}[ht]
 \centering
 \includegraphics[width=0.82\linewidth]{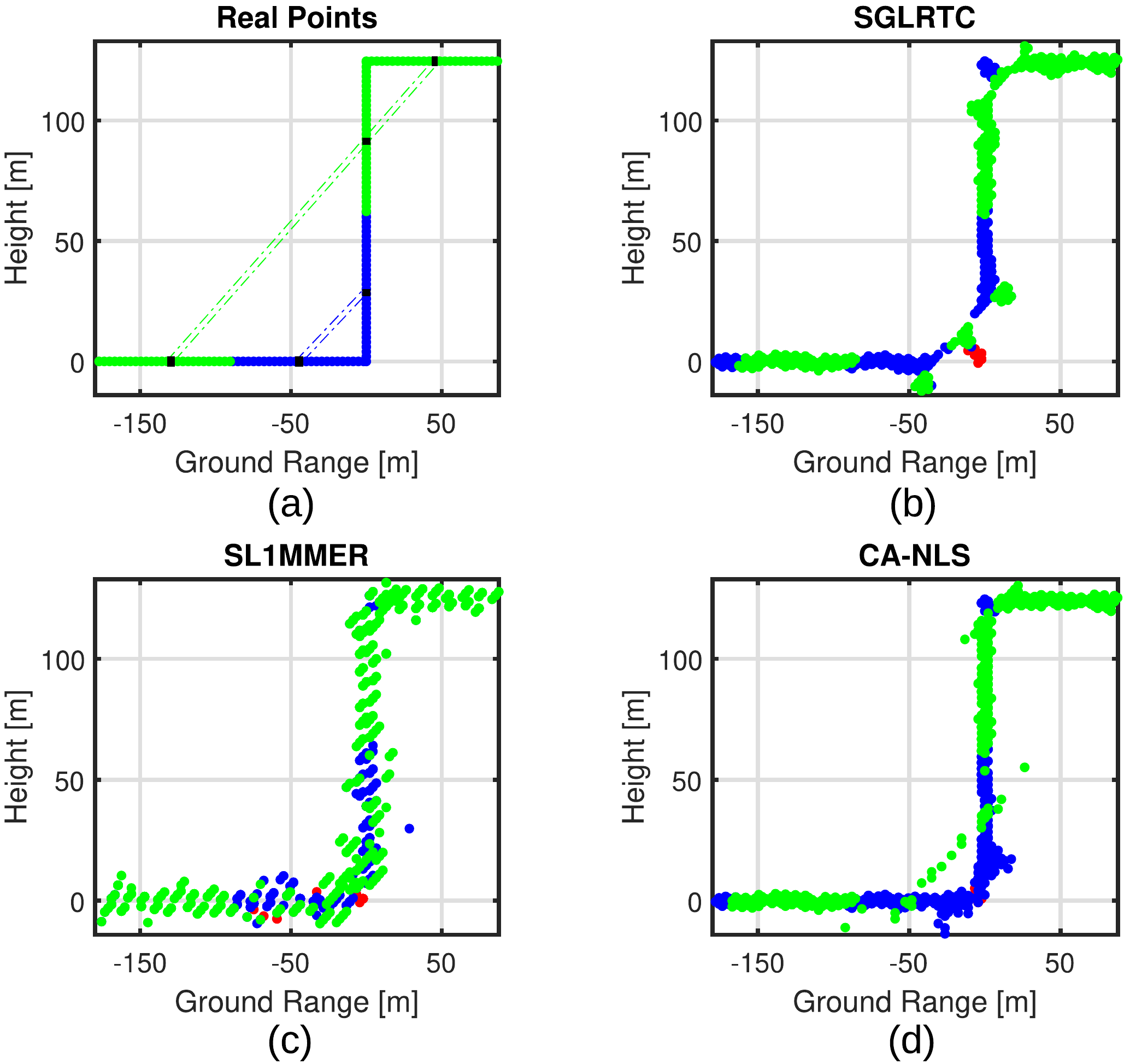}
 \caption{Side View of reconstructed images by TomoSAR methods,
  (a) Real points in which double scatterers and triple scatterers zones are indicated (b) SGLRTC (c) SL1MMER and (d) CA-NLS. Single, double and triple scatterer are indicated by red, blue, and green dots, respectively.}
 \label{fig:synthesized3dtomoside}
\end{figure}
The results of this experiment show the superiority of CA-NLS and SL1MMER methods as they can resolve targets even at the lower distance of targets  (the area close to the origin). Also, SGLRTC has the highest probability of missed detection and estimation error. By comparing the top and bottom sub-figures, we conclude that the non-zero phase difference increases the probability of detection (fewer red points) and decreases the estimation error. This verifies the role of $\Delta{\phi}$ in $CRLB_2$ \eqref{25} and the analytical expression of $P_D$ \eqref{21}-\eqref{22}. Finally, the experiment took $0.3$, $13.8$, and $524$ seconds for SGLRTC, CA-NLS, and SL1MMER, respectively, which shows a significant advantage of CA-NLS to SL1MMER in terms of complexity reduction.

\subsection{3D target reconstruction}\label{subsec4_4_1}
In this experiment, a three-part structure including facade, roof, and the ground is considered. The 3D structure is shown in Fig. \ref{fig:synthesized3dtomo} together with the results of TomoSAR reconstruction by SGLRTC, CA-NLS, and SL1MMER. The results show that CA-NLS performs better in reconstruction relative to the SGLRTC, especially at ground-facade and facade-roof conjunctions. In Fig. \ref{fig:synthesized3dtomoside}, side view images of reconstructed structure is shown. Although by SL1MMER the structure is formed, its parts are thickened and messy. In general, CA-NLS shows the highest performance qualitatively. 

To have a quantitative comparison, we analyzed the detection ability of the methods. For constructing ground, facade, and roof, a total of 1905 point targets were used, including 390 double scatterers (facade-ground) and 375 triple scatterers (roof-facade-ground) as indicated in Fig. \ref{fig:synthesized3dtomoside}. According to the results, it is observed that for the existing 390 double scatterers, CA-NLS, SL1MMER, and SGLRTC have detected 378, 244, and 326 double scatterers, respectively. These statistics show that the double scatterers are better detected by CA-NLS. Also, the total elapsed time of the experiment was $0.5$, $290$, and $420$ seconds for SGLRTC, CA-NLS, and SL1MMER, respectively, which shows the advantage of CA-NLS to SL1MMER in terms of computational cost.

Fig. \ref{fig:synthesized3dtomomos} shows the reconstruction results for CA-NLS regarding the penalty criteria used for model order selection. As the maximum number of targets is three $(k_\mathrm{Max}=3)$, the model selection scheme should compromise between the higher probability of detection and lower probability of false detection. As seen in Fig. \ref{fig:synthesized3dtomomos}, the quality of the AICc scheme is superior to that of AIC and BIC. The reason is the higher value of the penalty term for AICc compared with AIC and BIC which leads to less model overfitting. In other words, the probability of false detection, i.e. $P(\mathcal{D}_3|\mathcal{H}_2)$ for AICc is lower comparing to AIC and BIC. Hence the parts of the structure in which two scatterers are present at a range-azimuth pixel are better reconstructed by AICc, and consequently, a tidy image has resulted. According to the results, for the existing 390 double scatterers, BIC, AIC, and AICc have detected 278, 198, and 378 double targets, respectively. Also, for the 375 triple scatterers existed, BIC, AIC, and AICc have detected 474, 562, and 358 triple targets, respectively. It approves that AICc is the most suitable penalty criteria for the CA-NLS method.

\begin{figure}
 \centering
 \includegraphics[width=1\linewidth]{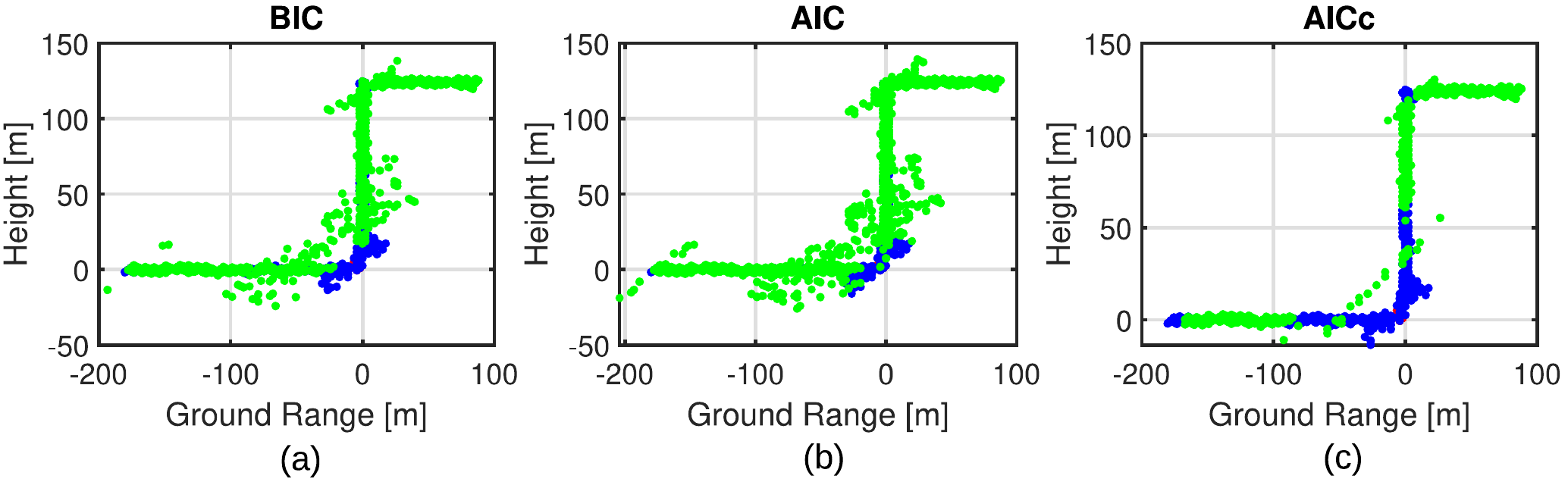}
 \caption{Comparison of model selection criteria used in CA-NLS for TomoSAR 3D reconstruction: (a) BIC (b) AIC and (c) AICc. }
 \label{fig:synthesized3dtomomos}
\end{figure}
Although the effect of deformation and thermal expansion has not been investigated in this paper, the performance of such exact modeling is predictable. By adding more dimensions, the layovered scatterers can be better resolved where the values of deformation velocity or thermal dilation of the two targets are likely different \cite{siddique2016single}. As a result, in most cases for 3D reconstruction, the fine detection step is not mandatory. The reason is the two close point targets obtained the coarse detection step are separable due to being located in different deformation or thermal bins. In \cite{budillon2017extension}, the authors have used a non-accurate fast version of Sup-GLRT that improves the detection performance by incorporating additional dimensions. However, some considerations should be given to the threshold setting in the multidimensional scenario.

\section{Computational complexity}\label{sec6}
A clear advantage of the CA-NLS is the reduction of the combinatorial search needed in the fine detection step. With the aid of coarse detection, the number of points (grids) in which, the $k$ targets should be searched, reduces from $M$ to $M_r$ and consequently, the number of likelihood function evaluations decreases from $M \choose k$ to $M_r \choose k$.   

In Table \ref{tab:Elapsed}, the elapsed time of different methods is shown for various numbers of grids. The normalized distance of targets is assumed $\alpha=\frac{1}{2}$. Simulations are performed in MATLAB R2017a environment, using Intel Core (TM) 2 Duo 7700k, 4.27 GHz. Choosing a denser grid does not affect the SGLRTC method but increases the time consumption of the Sup-GLRT method dramatically. For CA-NLS, using the coarse detection technique makes the computational complexity tolerable. Additionally, Sup-GLRT has another shortcoming that is the need for $k_\mathrm{max}$-dimensional search at each step of sequential detection, while for CA-NLS it is just necessary at the final step of model order selection. In other words, in the no-target case, the two-targets search (for finding $\Omega_{2}$) is not necessary for CA-NLS, while it is a must for Sup-GLRT.

For the closely spaced targets, the computational complexity of SL1MMER is more than that of CA-NLS. Also, for the SL1MMER method, applying compressed sensing is always obligatory even in the absence of any coherent target(s), where CA-NLS can easily reject the presence of target at coarse detection step.

\begin{table}[ht]
 \caption{Comparison of elapsed time (msec) for different methods.} 
 \label{tab:Elapsed}
 \begin{center}       
  \begin{tabular}{|l||l|l|l|l|} 
   \hline
   \rule[-1ex]{0pt}{3.5ex}  Number of grids & SGLRTC & CA-NLS & SL1MMER & Sup-GLRT  \\
   \hline\hline\rule[-1ex]{0pt}{3.5ex}  100 & 0.3 & 5 & 455 & 337  \\
   \hline
   \rule[-1ex]{0pt}{3.5ex}  200 & 0.3 & 9 & 467 & 1938 \\
            \hline
            \rule[-1ex]{0pt}{3.5ex}  300 & 0.3 & 16 & 484 & 3642\\
            \hline

  \end{tabular}
 \end{center}
\end{table} 
Beside search space reduction gained by CA-NLS, it has another advantage in complexity reduction. For the case when the targets are far apart, it can be shown that the fine detection step is not necessary. In other words, the location of targets estimated in the fine detection step is identical to the location of detected peaks acquired in the coarse detection step, as the following proposition states

\begin{prop}
 \label{proposition 2}
 Assume $supp_i, 0<{i\leq{k}}$ are the $k$ partial supports detected in the coarse detection step which are dis-contiguous i.e. $supp_i\bigcap{supp_j}=\emptyset,0<{i,j\leq{k}},i\neq{j}$. Then, the result of fine detection coincides with the associated peaks of detected partial supports in the coarse detection, i.e. $\Omega_{k}=\{s_{p_1},s_{p_2},...,s_{p_k}\}$.
\end{prop}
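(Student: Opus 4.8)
The plan is to show that when the partial supports are disjoint, the quadratic form $\mathbf{g}^H\Pi^{\perp}_{\Omega_k}\mathbf{g}$ in \eqref{6} decouples across the $k$ supports, so that minimizing it over $\Omega_k \in S$ reduces to $k$ independent one-dimensional minimizations — one per partial support $supp_i$ — and each of those is solved exactly by the corresponding peak $s_{p_i}$ already found in the coarse step. First I would recall from Algorithm \ref{alg.CoarseLocalization} that each $supp_i = [s_{p_i}-\rho_s, s_{p_i}+\rho_s]$, so the disjointness hypothesis $supp_i \cap supp_j = \emptyset$ for $i\neq j$ means the peaks $s_{p_i}$ are separated by more than $2\rho_s$. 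The key analytic input is that for two elevation positions $s$ and $s'$ with $|s-s'| > 2\rho_s$, the inner product $\boldsymbol{a}^H(s)\boldsymbol{a}(s')$ is negligibly small: by \eqref{3}, $\rho_s$ is exactly the first null of the beamforming kernel $|\boldsymbol{a}^H(s)\boldsymbol{a}(s')| = |\sum_n \exp(j2\pi\xi_n(s'-s))|$, so beyond $2\rho_s$ the columns drawn from distinct partial supports are (to the accuracy used throughout Section \ref{sec3}) mutually orthogonal.

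Next I would exploit this near-orthogonality to block-diagonalize the Gram matrix. Writing $\Omega_k = \{t_1,\dots,t_k\}$ with $t_i \in supp_i$, the matrix $\mathbf{A}^H_{\Omega_k}\mathbf{A}_{\Omega_k}$ is (approximately) the $k\times k$ identity times $N$ plus a block structure that is trivial because each block is a single column; more precisely, $\mathbf{A}_{\Omega_k}$ has orthogonal columns, so $\Pi^{\perp}_{\Omega_k} = \mathbf{I}_N - \frac{1}{N}\sum_{i=1}^{k}\boldsymbol{a}(t_i)\boldsymbol{a}^H(t_i)$ and hence
\begin{equation}
\mathbf{g}^H\Pi^{\perp}_{\Omega_k}\mathbf{g} = \|\mathbf{g}\|^2 - \frac{1}{N}\sum_{i=1}^{k}|\boldsymbol{a}^H(t_i)\mathbf{g}|^2.
\label{prop2eq}
\end{equation}
Minimizing the left-hand side of \eqref{prop2eq} over $\Omega_k \in S = \bigcup_i supp_i$ is therefore equivalent to maximizing $\sum_i |\boldsymbol{a}^H(t_i)\mathbf{g}|^2$, and since the $i$-th summand depends only on $t_i \in supp_i$ and the supports are disjoint, this separates into $k$ independent problems $\max_{t_i \in supp_i} |\boldsymbol{a}^H(t_i)\mathbf{g}|^2$.

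It then remains to check that the maximizer of $|\boldsymbol{a}^H(t_i)\mathbf{g}|^2$ over $t_i \in supp_i$ is precisely $s_{p_i}$. I would argue inductively following the coarse loop: at iteration $i$ the residual is $\mathbf{r}_{i-1} = \mathbf{g} - \mathbf{A}_{\Lambda_{i-1}}(\cdots)\mathbf{g}$, and by the same near-orthogonality the components of $\mathbf{g}$ lying in the spans of the already-cancelled (distant) steering vectors contribute negligibly to $\boldsymbol{a}^H(t_i)\mathbf{r}_{i-1}$ for $t_i \in supp_i$; hence $|\boldsymbol{a}^H(t_i)\mathbf{g}| \approx |\boldsymbol{a}^H(t_i)\mathbf{r}_{i-1}|$ on $supp_i$, and $p_i = \arg\max_i |\boldsymbol{a}^H(s_i)\mathbf{r}_{i-1}|$ from line 6 of the algorithm is exactly the maximizer of the $i$-th decoupled subproblem. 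Therefore $\hat{\Omega}_k = \{s_{p_1},\dots,s_{p_k}\}$, as claimed.

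The main obstacle is making the "near-orthogonality" rigorous: the argument rests on treating $|\boldsymbol{a}^H(s)\boldsymbol{a}(s')|$ as exactly zero for $|s-s'|>2\rho_s$, whereas for a finite, possibly non-uniform baseline distribution the sidelobes of the beamforming kernel are only small, not null. A fully rigorous version would need to bound the perturbation these residual sidelobes induce on \eqref{prop2eq} and on the location of each subproblem's maximizer, and show it is smaller than the grid spacing $\rho_s/17$ so that the discrete argmax is unchanged — this is the quantitative heart of the proof, and I expect the paper to handle it at the same level of approximation used for \eqref{14}–\eqref{15}, i.e.\ asserting the idealized orthogonality and deferring the sidelobe bound.
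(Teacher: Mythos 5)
Your proposal follows essentially the same route as the paper's Appendix D: convert the minimization of $\mathbf{g}^H\Pi^{\perp}_{\Omega_k}\mathbf{g}$ to a maximization, use orthogonality of steering vectors separated by more than $2\rho_s$ to reduce the Gram matrix to $N\mathbf{I}_k$ and decouple the search into $k$ independent per-support maximizations, and then observe that the cancellation terms in the coarse-step residuals vanish by the same orthogonality so that each $p_i$ solves its subproblem. The only difference is that you explicitly flag the orthogonality as approximate and in need of a sidelobe bound, whereas the paper asserts it as exact; your caveat is well taken but does not change the argument.
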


\begin{proof}
 See Appendix D.
\end{proof}

In other words the need for costly methods such as NLS and CS is eliminated when the targets are widely separated. It makes the proposed method computationally scalable regarding the spacing between the targets.
\section{Conclusion}\label{sec7}
In this paper, a high-resolution target detection method has been addressed for SAR tomography. The proposed two-step method utilizes the reduced search space derived from the correlation information in the first step for finding the accurate location of the targets in the second step. Also, an ITC based model order selection scheme has been used in the second step in two cases of known and unknown noise variance. The performance of the CA-NLS method was analyzed and investigated regarding detection ability, estimation accuracy, and computational cost. We have shown that our reduced-complexity method has the least estimation error and possesses the highest probability of detection with an acceptable probability of false detection. Additionally, we have demonstrated that the phase difference of the two targets is an effective parameter for detection ability other than $N$ and $SNR$. Furthermore, for far enough targets, we have shown that coarse detection suffices for target localization, and an accurate search is unnecessary. Overall, exploiting both CFAR detection and penalized likelihood criterion in the proposed two-step method gives a good compromise between $P_D$, $P_{FD}$, and the speed of detection. Further works can be focused on reaching a closed-form analytical solution for the case of closely-spaced targets, making the real-time detection of all kinds of targets possible, regardless of their spacing.

\appendix
\section[]{NLS estimation and model order selection}
\label{App_A}
\subsection[]{NLS estimation}
Assuming the additive noise in \eqref{2} is complex Gaussian, the likelihood function of the observed data in the presence of $k$ targets is defined as
\begin{equation}
p(\mathbf{g}|\mathbf{\pmb{\gamma}},k)=\frac{1}{\pi^N(\sigma_n^2)^N}exp(-\frac{{\|\mathbf{g}-\mathbf{A}_{\Omega_{k}}\pmb{\gamma}_{\Omega_{k}}\|}^2}{\sigma_n^2})
\label{28}
\end{equation}

The locations of the targets and their reflectivities are estimated by maximizing the log-likelihood function as follows
\begin{equation}
(\hat{\Omega}_{k},\hat{\pmb{\gamma}}_{\Omega_{k}})=arg\max_{\Omega_{k}}\{{\ln\,p(\mathbf{g}|\pmb{\gamma},k)}\}
\label{29}
\end{equation} 

Rewriting \eqref{29} using \eqref{28} gives
\begin{align}
(\hat{\Omega}_{k},\hat{\pmb{\gamma}}_{\Omega_{k}})&=arg\min_{\Omega_{k}}\{-\ln{p(\mathbf{g}|\pmb{\gamma},k)}\}\notag\\
&=arg\min_{\Omega_{k}}\{N\ln{2\pi}+N\ln{\sigma_n^2}+\frac{{\|\mathbf{g}-\mathbf{A}_{\Omega_{k}}\pmb{\gamma}_{\Omega_{k}}\|}^2}{\sigma_n^2}\}\notag\\
&=arg\min_{\Omega_{k}}\{{\|\mathbf{g}-\mathbf{A}_{\Omega_{k}}\pmb{\gamma}_{\Omega_{k}}\|}^2\}
\label{30}
\end{align} 
which, for a given $\Omega_{k}$, results in the least square estimate of reflectivity vector as \cite{kay1993fundamentals}
\begin{equation}
\hat{\pmb{\gamma}}_{\Omega_{k}}=(\mathbf{A}^H_{\Omega_{k}}\mathbf{A}_{\Omega_{k}})^{-1}\mathbf{A}^H_{\Omega_{k}}\mathbf{g}.
\label{31}
\end{equation} 

In order to find the location of targets, $\hat{\pmb{\gamma}}_{\Omega_{k}}$ is substituted in ${\|\mathbf{g}-\mathbf{A}_{\Omega_{k}}\pmb{\gamma}_{\Omega_{k}}\|}^2$ and $\hat{\Omega}_{k}$ is obtained as
\begin{equation}
\hat{\Omega}_{k}=arg\min_{\Omega_{k}}\mathbf{g}^H\Pi^{\perp}_{\Omega_{k}}\mathbf{g}
\label{32}
\end{equation} 
\subsection[]{Model order selection: known and unknown $\sigma_n^2$}
To find the number of targets, the following penalized likelihood criterion should be optimized 
\begin{equation}
\hat{k}=arg\min_{k}\{-\ln{p(\mathbf{g}|\hat{\pmb{\gamma}}_{\Omega_{k}},k)}+\mathcal{P}(k)\}
\label{33}
\end{equation}
where $\mathcal{P}(k)$ is the penalty term. Substituting the likelihood function \eqref{28} into \eqref{33} gives
\begin{equation}
\hat{k}=arg\min_{k}\{N\ln{2\pi}+N\ln{\sigma_n^2}+\frac{{\|\mathbf{g}-\mathbf{A}_{\Omega_{k}}\hat{\pmb{\gamma}}_{\Omega_{k}}\|}^2}{\sigma_n^2}+\mathcal{P}(k)\}
\label{34}
\end{equation}

The solution of the above problem depends on the information about the noise variance $\sigma_n^2$. If the noise variance is known, the number of targets is obtained by solving the following problem
\begin{equation}
\hat{k}=arg\min_{k}\{\frac{{\|\mathbf{g}-\mathbf{A}_{\Omega_{k}}\hat{\pmb{\gamma}}_{\Omega_{k}}\|}^2}{\sigma_n^2}+\mathcal{P}(k)\},\hspace{.8cm}  (Known\hspace{.15cm}\sigma^2_n).
\label{35}
\end{equation}

When the noise variance is unknown, it should be estimated from the received data. Using ML estimation we have
\begin{align}
\hat{\sigma}_n^2&=arg\min_{\sigma_n^2}\{-\ln{p(\mathbf{g}|\sigma_n^2,k)}\}\notag\\
&=arg\min_{\sigma_n^2}\{N\ln{2\pi}+N\ln{\sigma_n^2}+\frac{{\|\mathbf{g}-\mathbf{A}_{\Omega_{k}}\hat{\pmb{\gamma}}_{\Omega_{k}}\|}^2}{\sigma_n^2}\}
\label{36}
\end{align} 
The minimization problem \eqref{36} can be solved analytically by taking derivative of its argument with respect to $\sigma_n^2$ and setting it to zero. The estimation result is as follows \cite{kay2001conditional,stoica2004model}
\begin{equation}
\hat{\sigma}_n^2=\frac{1}{N}\|{\mathbf{g}-\mathbf{A}_{\Omega_{k}}{\hat{\pmb{\gamma}}_{\Omega_{k}}}}\|^2.
\label{37}
\end{equation}

By substituting $\hat{\sigma}_n^2$ in \eqref{34}, the third term of log-likelihood function becomes constant, equal to $N$. Hence, after omitting the constant terms of the likelihood function \eqref{34}, the number of targets is derived as
\begin{equation}
\hat{k}=arg\min_{k}\{N\ln(\frac{\|{\mathbf{g}-\mathbf{A}_{\Omega_{k}}}\hat{\pmb{\gamma}}_{\Omega_{k}}\|^2}{N})+\mathcal{P}(k)\},\hspace{.35cm} (Unknown\hspace{.15cm}\sigma^2_n).
\label{38}
\end{equation}

On the other side, at the NLS estimate of $\hat{\Omega}_{k}$ and $\hat{\pmb{\gamma}}_{\Omega_{k}}$ the  $\|{\mathbf{g}-\mathbf{A}_{\Omega_{k}}}\hat{\pmb{\gamma}}_{\Omega_{k}}\|^2$ term in \eqref{35}, \eqref{38} is minimized which is equal to
\begin{equation}
\varepsilon(k)=\min_{\Omega_{k}}\{\mathbf{g}^H\Pi^{\perp}_{\Omega_{k}}\mathbf{g}\}.
\label{39}
\end{equation}
By considering \eqref{35}, \eqref{38} and \eqref{39}, proof of \eqref{8} is complete.
\section{Proof of Proposition \ref{proposition 1}}
\label{App_B}
First, assume $\mathbf{n}$ is a zero-mean complex circular Gaussian random vector $\mathcal{CN}(0,\sigma^2_n\mathbf{I}_N)$, and $\mathbf{r}$ is a $N \times 1$ complex vector. Next, let's define $X$ and $Y $ as follows
\begin{align}
& X = \frac{1}{\sigma_n^2}\|\mathbf{n}\|^2 \notag\\
& Y = \frac{1}{\sigma_n^2}\|\mathbf{n+r}\|^2
\label{40}
\end{align}
which have standard central and noncentral chi-square distributions with $N$  degrees of freedom respectively. It is desired to obtain distribution of the difference between above random variables 
\begin{equation}
Z=\frac{1}{\sigma_n^2}(\|\mathbf{n+r}\|^2-\|\mathbf{n}\|^2).
\label{41}
\end{equation}
and also to derive its statistical parameters.

By rewriting the above equation, we have
\begin{align}
Z&=\frac{1}{\sigma_n^2}((\mathbf{n+r})^H(\mathbf{n+r})-\mathbf{n}^H\mathbf{n}) \notag\\
&=\frac{1}{\sigma_n^2}(\mathbf{n}^H\mathbf{n}+\mathbf{n}^H\mathbf{r}+\mathbf{r}^H\mathbf{n}+\mathbf{r}^H\mathbf{r}-\mathbf{n}^H\mathbf{n}) \notag\\
&=\frac{1}{\sigma_n^2}(\mathbf{n}^H\mathbf{r}+\mathbf{r}^H\mathbf{n})+\frac{\mathbf{r}^H\mathbf{r}}{\sigma_n^2} \notag\\
&=\frac{2}{\sigma_n^2}\mathrm{Re}(\mathbf{r}^H\mathbf{n})+\frac{\|\mathbf{r}\|^2}{\sigma_n^2}.
\label{42}
\end{align}
where $\mathrm{Re(.)}$ stands for real part.
According to \eqref{17} for the expression of the noncentrality term, we have
\begin{equation}
Z=\frac{2}{\sigma_n^2}\mathrm{Re}(\mathbf{r}^H\mathbf{n})+\lambda_r.
\label{43}
\end{equation}

By expanding the real term of $\mathbf{r}^H\mathbf{n}$, we have
\begin{align}
Z&=\frac{2}{\sigma_n^2}\boldsymbol{[}\mathrm{Re}(\mathbf{r^H})\mathrm{Re}(\mathbf{n})-\mathrm{Im}(\mathbf{r^H})\mathrm{Im}(\mathbf{n})\boldsymbol{]}+\lambda_r \notag\\
&=\frac{2}{\sigma_n^2}\boldsymbol{[}\mathrm{Re}(\mathbf{r})\mathrm{Re}(\mathbf{n})+\mathrm{Im}(\mathbf{r})\mathrm{Im}(\mathbf{n})\boldsymbol{]}+\lambda_r
\label{44}
\end{align}
where $\mathrm{Im}(.)$ stands for imaginary part.

Each of the two terms inside the bracket are linear combinations of zero mean Gaussian random variables $\mathrm{Re}(\mathbf{n})$ and $\mathrm{Im}(\mathbf{n})$, hence their summation has also zero mean Gaussian distribution. As a result, $Z$ has normal distribution with the mean equal to the noncentrality term 
\begin{equation}
m_Z = \lambda_r
\label{45}
\end{equation}
and the expression of the variance of $Z$ is
\begin{align}
\sigma^2_Z&=E\boldsymbol{[}(Z-m_Z)^2\boldsymbol{]} \notag\\
&=E\{\frac{2}{\sigma_n^2}\boldsymbol{[}\mathrm{Re}(\mathbf{r})\mathrm{Re}(\mathbf{n})+\mathrm{Im}(\mathbf{r})\mathrm{Im}(\mathbf{n})\boldsymbol{]}^2\} \notag\\
&=\frac{4}{\sigma_n^4}\{E\boldsymbol{[}(\mathrm{Re}(\mathbf{r})\mathrm{Re}(\mathbf{n}))^2\boldsymbol{]}+E\boldsymbol{[}(\mathrm{Im}(\mathbf{r})\mathrm{Im}(\mathbf{n}))^2\boldsymbol{]} \notag\\
&+2E\boldsymbol{[}\mathrm{Re}(\mathbf{r})\mathrm{Re}(\mathbf{n})\mathrm{Im}(\mathbf{r})\mathrm{Im}(\mathbf{n})\boldsymbol{]}\}.
\label{46}
\end{align}

Considering $\mathrm{Re}(\mathbf{n})$ and $\mathrm{Im}(\mathbf{n})$ that are two uncorrelated Gausian random variables, cross-correlation of every linear combination of them equals to zero. Consequently the third term of the \eqref{46} becomes zero. So
\begin{equation}
\sigma^2_Z=\frac{4}{\sigma_n^4}\{E\boldsymbol{[}(\mathrm{Re}(\mathbf{r})\mathrm{Re}(\mathbf{n}))^2\boldsymbol{]}+E\boldsymbol{[}(\mathrm{Im}(\mathbf{r})\mathrm{Im}(\mathbf{n}))^2\boldsymbol{]}\}.\\
\label{47}
\end{equation}

To calculate the remaining terms inside the brace, it is necessary to derive variance of a linear combination of $N$ sample zero mean Gausian random variables. Consider $\mathbf{x} = [x_1,x_2,...,x_N]^\mathrm{T}$ a vector of $N$ independent zero-mean Gausian random variables ($E(x_i^2)=\sigma_x^2$, $E(x_ix_j)=0, i\neq{j}$) and the weighting vector $\boldsymbol{a}=[a_1,a_2,...,a_N]$. Then we have
\begin{align}
E{(\boldsymbol{a}\mathbf{x})^2}&=E({\sum^N_{i=1}}a^2_ix^2_i+\sum^N_{i=1}\sum^N_{j=1,j\neq{i}}a_ix_ia_jx_j) \notag\\
&=\sum^N_{i=1}a^2_iE(x^2_i)+\sum^N_{i=1}\sum^N_{j=1,j\neq{i}}a_ia_jE(x_ix_j) \notag\\
&=\sigma_x^2\sum^N_{i=1}a_i^2 \notag\\
&=\sigma_x^2\|\boldsymbol{a}\|^2.
\label{48}
\end{align}

According to the above result, the variance of $Z$ can be derived from \eqref{48} as
\begin{align}
\sigma^2_Z&=\frac{4}{\sigma_n^4}(\sigma_{nR}^2\|\mathrm{Re}(\mathbf{r})\|^2+\sigma_{nI}^2\|\mathrm{Im}(\mathbf{r})\|^2) \notag\\
&=\frac{4}{\sigma_n^4}(\frac{\sigma_{n}^2}{2}\|\mathrm{Re}(\mathbf{r})\|^2+\frac{\sigma_{n}^2}{2}\|\mathrm{Im}(\mathbf{r})\|^2) \notag\\
&=\frac{2}{\sigma_n^2}\|\mathbf{r}\|^2 \notag\\
&=2\lambda_r
\label{49}
\end{align}
where $\sigma_{nR}^2$ and $\sigma_{nI}^2$ are variance of real and imaginary parts of the random vector $\mathbf{n}$ which because of their independence, are equal to half the total variance $\sigma_{n}^2$. 

Therefore, $Z$ is a Gaussian random variable with mean equal to the non-centrality value and variance equal to two times of that value, and the proof is complete.

\section{Derivation of $s_p$, $\hat{\boldmath{\gamma_p}}$ and $\lambda_r$}
\label{App_C}
To derive the expression of $\lambda_r$, the parameters $s_p$ and $\hat{\boldmath{\gamma_p}}$ should be available. Initially, it is desired to find the location of the falsely-detected target $s_p$ where two targets are present at $s_{M_1}$ and $s_{M_2}$. According to \eqref{6}, $s_p$ is obtained by minimizing the following cost function
\begin{equation}
\mathbf{J}={\mathbf{g}^{H}\boldsymbol{[}\mathbf{I}_N-\mathbf{A}_{\Omega_1}(\mathbf{{A^H}}_{\Omega_1}\mathbf{A}_{\Omega_1})^{-1}\mathbf{{A^H}}_{\Omega_1}\boldsymbol{]}\mathbf{g}}
\label{50}
\end{equation} 
As ${\mathbf{g}}^{H}\mathbf{g}$ is a constant value, the minimization in \eqref{50} is equivalent to maximization of the following cost function
\begin{equation}
\mathbf{J}=\mathbf{g}^{H}\mathbf{A}_{\Omega_1}(\mathbf{{A^H}}_{\Omega_1}\mathbf{A}_{\Omega_1})^{-1}\mathbf{{A^H}}_{\Omega_1}\mathbf{g}.
\label{51}
\end{equation} 
As $s_p$ is the maximizer the above cost function, $\mathbf{A}_{\Omega_1}=\boldsymbol{[}\mathbf{\textit{a}}(s_p)\boldsymbol{]}$. Since steering vectors are coloumns of a Fourier matrix, we have $\boldsymbol{a}^H(s_p)\boldsymbol{a}(s_p) = N$. Hence
\begin{equation}
\mathbf{J}={\frac{1}{N}}{\boldsymbol{|}\boldsymbol{a}^H(s_p)\mathbf{g}\boldsymbol{|}}^2.
\label{52}
\end{equation} 

Neglecting the noise term, $\mathbf{g}$ is summation of two steering vectors at $s_{M_1}$ and $s_{M_2}$ multiplied by $\sigma_s$. Hence $s_p$ is the optimizer of following cost function
\begin{equation}
\mathbf{J}={\frac{\sigma_s^2}{N}}{\boldsymbol{|}\boldsymbol{a}^H(s_p)(\boldsymbol{a}(s_{M_1}) + e^{j\Delta{\phi}}\boldsymbol{a}(s_{M_2}))\boldsymbol{|}}^2.
\label{53}
\end{equation} 

By introducing the correlation between $m$th and $n$th steering vectors  
\begin{equation}
\mathcal{C}_{m,n} = \frac{\boldsymbol{a}^H(s_{m})\boldsymbol{a}(s_{n})}{N}
\label{54}
\end{equation}
the cost function \eqref{53} can be rewritten as
\begin{equation}
\mathbf{J}={N\sigma_s^2}{\boldsymbol{|}\mathcal{C}_{p,M_1}+e^{j\Delta{\phi}}\mathcal{C}_{p,M_2}\boldsymbol{|}}^2.
\label{55}
\end{equation} 

According to the definition of steering vectors, the value of correlation can be obtained as follows

\begin{equation}
\mathcal{C}_{m,n}={\frac{1}{N}}\sum^N_{i=1}exp(\frac{j4\pi(s_n-s_m)b_i}{\lambda{R_0}}).
\label{56} 
\end{equation}

Due to equi-spaced sampling in the baseline and elevation directions, we can write $b_i=(\frac{i}{N-1})\Delta_b$ and $s_n-s_m=\frac{(n-m)}{M-1}\Delta_s$. Then, by applying some mathematical operations (summation of a geometric progression) on \eqref{56}, the correlation term can be expressed as the following digital sinc function
\begin{equation}
\mathcal{C}_{m,n}=\exp(-jL(n-m))\frac{\sin(NL(n-m))}{N\sin(L(n-m))}
\label{57}
\end{equation}
where the factor $L$ is equal to
\begin{equation}
L=\frac{2{\pi}\Delta s\Delta b}{(M-1)(N-1)\lambda{R_0}}.
\label{58}
\end{equation}

As is seen in \eqref{57}, $\mathcal{C}_{m,n}$ is a function of the difference of indexes of two steering vectors, and can be used as $\mathcal{C}_{mn}=\mathcal{C}(n-m)$. Also conjugate symmetry property is held, i.e. $\mathcal{C}(n-m)=\mathcal{C}^*(m-n)$. 

For the correlation between $M_1$ and $M_2$ indexes related to the two targets, $\mathcal{C}_{m,n}$ in \eqref{57} can be more simplified. Due to the definition of the normalized distance of targets $\alpha=(s_{M_2}-s_{M_1})/\rho_s$ and Rayleigh resolution in \eqref{3}, the following expression is resulted
\begin{equation}
\mathcal{C}(M_2-M_1)=\exp(-jL'\alpha)\frac{\sin(NL'\alpha)}{N\sin(L'\alpha)}
\label{59}
\end{equation}
where the factor $L'=\pi/(N-1)$ and is obtained with regard to the definition of $\rho_s$ and $\alpha$.

Afterwards, by substituting \eqref{59} in \eqref{55} and through some inequality math operations, it can be shown that $s_p$ the position of optimizer of \eqref{55} is the average of $s_{M1}$ and $s_{M2}$
\begin{equation}
s_p = \frac{s_{M_1}+s_{M_2}}{2}.
\label{60}
\end{equation}

Equivalently, due to correspondence between elevation samples and their indexes $p = (M_1+M_2)/2$. Now it is sufficient to get the least square estimated value $\hat{\boldmath{\gamma_p}}$ at location $(s_{M_1}+s_{M_2})/2$ with regard to \eqref{7} (after scaling to $\sigma_s$)
\begin{align}
\hat{\boldmath{\gamma_p}}&=\frac{1}{\sigma_s}.(a^H(s_p)a(s_p))^{-1}a^H(s_p)\mathbf{g} \notag\\
&=\frac{1}{N}a^H(s_p)\frac{\mathbf{g}}{\sigma_s} \notag\\
&=\frac{1}{N}a(\frac{s_{M_1}+s_{M_2}}{2})^H(a(s_{M_1})+e^{j\Delta{\phi}}a(s_{M_2})) \notag\\
&=\mathcal{C}(M_1-\frac{M_1+M_2}{2})+e^{j\Delta{\phi}}\mathcal{C}(M_2-\frac{M_1+M_2}{2}) \notag\\
&=\mathcal{C}(\frac{M_1-M_2}{2})+e^{j\Delta{\phi}}\mathcal{C}(\frac{M_2-M_1}{2}) \notag\\
\label{61}
\end{align}
Regarding \eqref{16} and \eqref{17}, the noncentrality term can be derived as follows
\begin{align}
\lambda_r =&\frac{\|\mathbf{r}\|^2}{\sigma_n^2} \notag\\
=& \frac{\sigma_s^2}{\sigma_n^2}\|\boldsymbol{a}(s_{M_1}) + e^{j\Delta{\phi}}\boldsymbol{a}(s_{M_2}) - \hat{\boldmath{\gamma_p}}\boldsymbol{a}(s_p)\|^2 \notag\\
= & SNR\{\|\boldsymbol{a}(s_{M_1}))\|^2 + \|\boldsymbol{a}(s_{M_2}))\|^2 + |\hat{\boldmath{\gamma_p}}|^2 \|\boldsymbol{a}(s_p))\|^2 \notag\\
& + 2N.\mathrm{Re}\boldsymbol{[}e^{j\Delta{\phi}}\mathcal{C}(M_2-M_1) - \hat{\boldmath{\gamma_p}}.\mathcal{C}(p-M_1) \notag\\
& - e^{-j\Delta{\phi}}\hat{\boldmath{\gamma_p}}.\mathcal{C}(p-M_2)\boldsymbol{]}\} \notag\\
= & SNR\{2N+|\hat{\boldmath{\gamma_p}}|^2N+2N.\mathrm{Re}\boldsymbol{[}e^{j\Delta{\phi}}\mathcal{C}(M_2-M_1) \notag\\
& - \hat{\boldmath{\gamma_p}}.(\mathcal{C}(\frac{M_2-M_1}{2}) + e^{-j\Delta{\phi}}\mathcal{C}(\frac{M_1-M_2}{2})) \boldsymbol{]}\} \notag\\
= & N.SNR. \{2+|\hat{\boldmath{\gamma_p}}|^2+2.\mathrm{Re}\boldsymbol{[}e^{j\Delta{\phi}}\mathcal{C}(M_2-M_1) -|\hat{\boldmath{\gamma_p}}|^2\boldsymbol{]}\}\notag\\
= & N.SNR. \{2+2\mathrm{Re}\boldsymbol{[}e^{j\Delta{\phi}}\mathcal{C}(M_2-M_1)\boldsymbol{]}-|\hat{\boldmath{\gamma_p}}|^2\}. \notag\\
\label{62}
\end{align}

On the other hand, regarding real and imaginary parts of the correlation term in \eqref{59}, the second and third terms inside the brace of \eqref{62} can be written
\begin{equation}
\mathrm{Re}\boldsymbol{[}e^{j\Delta{\phi}}\mathcal{C}(M_2-M_1)\boldsymbol{]} = cos(\Delta{\phi})\cos(L'\alpha)\frac{\sin(NL'\alpha)}{N\sin(L'\alpha)}+sin(\Delta{\phi})\frac{\sin(NL'\alpha)}{N}.
\label{63}
\end{equation}
\begin{align}
|\hat{\boldmath{\gamma_p}}|^2 =&                                                                         2|\mathcal{C}(\frac{M_1-M_2}{2})|^2+2\mathrm{Re}.\boldsymbol{[}e^{-j\Delta{\phi}}\mathcal{C}^2(\frac{M_1-M_2}{2}){]}\notag\\
=&2\frac{\sin^2(\frac{NL'\alpha}{2})}{N^2\sin^2(\frac{L'\alpha}{2})}+2cos(L'\alpha-\Delta{\phi})\frac{\sin^2(\frac{NL'\alpha}{2})}{N^2\sin^2(\frac{L'\alpha}{2})}\notag\\
=&4cos^2(\frac{L'\alpha}{2}-\frac{\Delta{\phi}}{2})\frac{\sin^2(\frac{NL'\alpha}{2})}{N^2\sin^2(\frac{L'\alpha}{2})}
\label{64}
\end{align}
which results in final expression of noncentrality term as
\begin{align}
\lambda_r = & N.SNR. \{2+2cos(\Delta{\phi})\cos(L'\alpha)\frac{\sin(NL'\alpha)}{N\sin(L'\alpha)}+2sin(\Delta{\phi})\frac{\sin(NL'\alpha)}{N} \notag\\
&-4cos^2(\frac{L'\alpha}{2}-\frac{\Delta{\phi}}{2})\frac{\sin^2(\frac{NL'\alpha}{2})}{N^2\sin^2(\frac{L'\alpha}{2})}\}\notag\\
=& N. SNR.\vartheta(\alpha,\Delta{\phi}).
\label{65}
\end{align}
\begin{figure}
 \centering
 \includegraphics[width=0.8\linewidth]{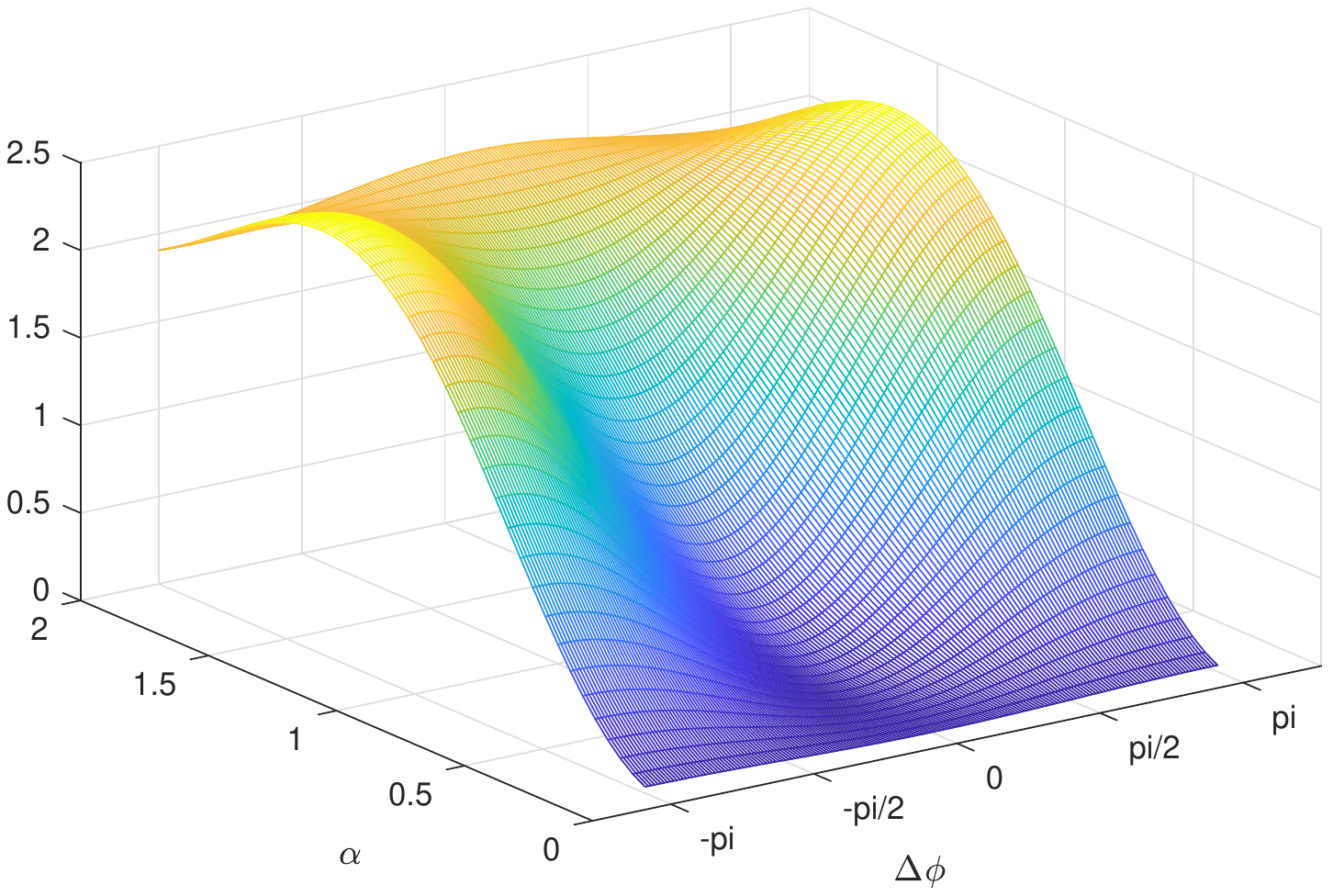}
 \caption{Plot of $\vartheta(\alpha,\Delta{\phi})$.}
 \label{fig:vartheta}
\end{figure}

And the proof is done. The plot of $\vartheta(\alpha,\Delta{\phi})$ is shown in Fig. \ref{fig:vartheta}.

\section{Proof of Proposition \ref{proposition 2}}
\label{App_D}
Based on the assumption, detection of $k$ points is desired in the $k$ detected partial supports. Regarding \eqref{6}, location of targets are estimated as
\begin{equation}
\hat{\Omega}_{k}=\arg\min_{\Omega_{k}}\{\mathbf{g}^H[\mathbf{I}-\mathbf{A}_{\Omega_k}(\mathbf{{A^H}}_{\Omega_k}\mathbf{A}_{\Omega_k})^{-1}\mathbf{{A^H}}_{\Omega_k}]\mathbf{g}\}
\label{66}
\end{equation} 
Let's define $\Omega_k=[m_1,m_2,...,m_k]$ as the $k$-element support whose elements are in a range from 1 to $M$, and  $\mathbf{A}_{\Omega_k}$ is the related $k$-columns steering matrix $\mathbf{A}_{\Omega_k}=[\boldsymbol{a}_{m_1},\boldsymbol{a}_{m_2},... ,\boldsymbol{a}_{m_k}]$.

Regarding $\mathbf{g}^H\mathbf{g}$ which is constant, the minimization problem in \eqref{66} is converted to the following maximization problem
\begin{equation}
\hat{\Omega}_{k}=\arg\max_{\Omega_{k}}\{\mathbf{g}^{H}\mathbf{A}_{\Omega_k}(\mathbf{{A^H}}_{\Omega_k}\mathbf{A}_{\Omega_k})^{-1}\mathbf{{A^H}}_{\Omega_k}\mathbf{g}\}
\label{67}
\end{equation}

Simplifying the middle term of above argument gives
\begin{equation}
\begin{split}
\mathbf{{A^H}}_{\Omega_k}\mathbf{A}_{\Omega_k}
&=
\begin{bmatrix}
\boldsymbol{a}_{m_1}^H\\
\boldsymbol{a}_{m_2}^H\\
\vdots \\
\boldsymbol{a}_{m_k}^H
\end{bmatrix}
\begin{bmatrix}
\boldsymbol{a}_{m_1} & \boldsymbol{a}_{m_2} & \ldots & \boldsymbol{a}_{m_k}
\end{bmatrix}\\
&=\begin{bmatrix}
\boldsymbol{a}_{m_1}^H\boldsymbol{a}_{m_1} & \boldsymbol{a}_{m_1}^H\boldsymbol{a}_{m_2} & \ldots & \boldsymbol{a}_{m_1}^H\boldsymbol{a}_{m_k}\\
\boldsymbol{a}_{m_2}^H\boldsymbol{a}_{m_1} & \boldsymbol{a}_{m_2}^H\boldsymbol{a}_{m_2} & \ldots & \boldsymbol{a}_{m_2}^H\boldsymbol{a}_{m_k}\\
\ldots & \ldots &\ldots &\ldots\\
\boldsymbol{a}_{m_k}^H\boldsymbol{a}_{m_1} & \boldsymbol{a}_{m_k}^H\boldsymbol{a}_{m_2} & \ldots & \boldsymbol{a}_{m_k}^H\boldsymbol{a}_{m_k}
\end{bmatrix}
\end{split}
\label{68}
\end{equation}

To search $k$ points inside $k$ distinct regions, every partial support should include one point unless some of the partial supports do not contain any point which is in contradiction with the definition of detected support. Hence, every partial support includes an element of $\Omega_k$, i.e $m_i \subset supp_i, \forall i \in [1,k]$. On the other hand, regarding the steering matrix which is a partial Fourier matrix, the columns whose spacing is more than $2\rho_s$ are orthogonal. Consequently

\begin{equation}
\boldsymbol{a}_{m_i}^H\boldsymbol{a}_{m_j}=\left\{
\begin{array}{ll}
N, \hspace{.9cm}  i=j\\
\\
0, \hspace{1cm}  i\neq j
\end{array}
\right.
\label{69}
\end{equation}

Thus
\begin{equation}
\begin{split}
\mathbf{{A^H}}_{\Omega_k}\mathbf{A}_{\Omega_k}
=\begin{bmatrix}
N & 0 & \ldots & 0\\
0 & N & \ldots & 0\\
\ldots & \ldots &\ldots &\ldots\\
0 & 0 & \ldots & N
\end{bmatrix}
&=N\mathbf{I}_k
\end{split}
\label{70}
\end{equation}

By substituting \eqref{70} in the maximization problem \eqref{67}, the argument will be
\begin{equation}
\begin{split}
\mathbf{g}^{H}\mathbf{A}_{\Omega_k}(N\mathbf{I}_k)^{-1}\mathbf{{A^H}}_{\Omega_k}\mathbf{g}
&=\frac{1}{N}\mathbf{g}^{H}
\begin{bmatrix}
\boldsymbol{a}_{m_1} & \boldsymbol{a}_{m_2} & \ldots & \boldsymbol{a}_{m_k}
\end{bmatrix}
\begin{bmatrix}
\boldsymbol{a}_{m_1}^H\\
\boldsymbol{a}_{m_2}^H\\
\vdots \\
\boldsymbol{a}_{m_k}^H
\end{bmatrix}
\mathbf{g}\\
&=\frac{1}{N}\mathbf{g}^{H}[(\boldsymbol{a}_{m_1}\boldsymbol{a}_{m_1}^H+\boldsymbol{a}_{m_2}\boldsymbol{a}_{m_2}^H+\ldots+\boldsymbol{a}_{m_k}\boldsymbol{a}_{m_k}^H)]\mathbf{g}\\
&+\frac{1}{N}\mathbf{g}^{H}[\sum\limits_{i=1}^N \sum\limits_{j=1,j\neq{i}}^N \boldsymbol{a}_{m_i}\boldsymbol{a}_{m_j}^H]\mathbf{g}
\end{split}
\label{71}
\end{equation}

Due to orthogonality property of steering vectors mentioned in \eqref{69}, the second term of \eqref{71} becomes zero which yields the following maximization
\begin{equation}
\hat{\Omega}_{k}=\arg\max_{\Omega_{k}}\{\frac{1}{N}\{|\mathbf{g}^H\boldsymbol{a}_{m_1}|^2+|\mathbf{g}^H\boldsymbol{a}_{m_2}|^2+...+|\mathbf{g}^H\boldsymbol{a}_{m_k}|^2\}\}
\label{72}
\end{equation} 

When the partial supports are dis-contiguous, the terms inside the argument of \eqref{72} become independent. Hence, maximization of total expression is equivalent to the maximization of individual terms as 
\begin{equation}
\hat{\Omega}_{k}=\bigcup\limits^{k}_{i=1} {\arg\max_{m_i\in{supp_i}}{|\mathbf{g}^H\boldsymbol{a}_{m_i}|^2}}
\label{73}
\end{equation} 

On the other hand, according to Algorithm \ref{alg.CoarseLocalization}, the detected peaks in the course detection step are obtained by maximization of the canceled terms (residuals). Without loss of generality and for clarity, we consider $k=2$. In this case the two peak points are detected as follows
\begin{equation}
p_1=\arg\max_{i\in{supp_1}}|\mathbf{g}^H\boldsymbol{a}(s_i)|^2
\label{74}
\end{equation} 
\begin{equation}
p_2=\arg\max_{i\in{supp_2}}|\mathbf{g}^H_{\perp}\boldsymbol{a}(s_i)|^2
\label{75}
\end{equation} 

Maximization of the argument of \eqref{74} for $p_1$ is equivalent to maximization of the first term of \eqref{73}. Hence $m_1=p_1$. About $p_2$, the maximization in \eqref{75} can be simplified as
\begin{align}
p_2&=arg\max_{i\in{supp_2}}|\mathbf{g}^H_{\perp}\boldsymbol{a}(s_i)|^2\notag\\
&=arg\max_{i\in{supp_2}}|\boldsymbol{a}^H(s_i)\mathbf{g}_{\perp}|^2\notag\\
&=arg\max_{i\in{supp_2}}|\boldsymbol{a}^H(s_i)(\mathbf{g} - \boldsymbol{a}(s_{p_1})[\boldsymbol{a}(s_{p_1})^H\boldsymbol{a}(s_{p_1})]^{-1}\boldsymbol{a}(s_{p_1})^H\mathbf{g})|^2\notag\\
&=arg\max_{i\in{supp_2}}|\boldsymbol{a}^H(s_i)\mathbf{g} - \boldsymbol{a}^H(s_i)\boldsymbol{a}(s_{p_1})[\boldsymbol{a}(s_{p_1})^H\boldsymbol{a}(s_{p_1})]^{-1}\boldsymbol{a}(s_{p_1})^H\mathbf{g})|^2\notag\\\notag\\
\label{76}
\end{align}

As $s_{p_1}$ and $s_{p_2}$ are far away ($|s_{p_2}-s_{p_1}|>2\rho_s$) due to being in discontinuous partial supports, their  steering vectors are orthogonal i.e $\boldsymbol{a}^H(s_i)\boldsymbol{a}(s_{p_1})=0$. Consequently, the second term inside $|.|$ in \eqref{76} becomes zero and $p_2$ is obtained as 
\begin{equation}
p_2=arg\max_{i\in{supp_2}}|\mathbf{g}^H\boldsymbol{a}(s_i)|^2
\label{77}
\end{equation}
which is equivalent to maximizing the second term of \eqref{73}. Hence, $m_2=p_2$ and the same logic can be applied for the rest of detected peaks when $k>2$, i.e. $m_i=p_i, 1\le{i\leq{k}}$, or equivalently
\begin{equation}
\hat{\Omega}_{k}=\{p_1,p_2,...,p_k\}
\label{78}
\end{equation}


\bibliography{Library}
\bibliographystyle{spiejour}   


\vspace{2ex}\noindent\textbf{Ahmad Naghavi} received his B.Sc. degree in electrical engineering from Amirkabir University of Technology (Tehran Polytechnic), Tehran, Iran in 2001, and the M.Sc. degree from the Ferdowsi University of Mashhad, Iran, in 2004. He is currently a Ph.D. student in the Electrical Engineering department of Isfahan University of Technology (IUT). His research interests include statistical array signal processing, compressed sensing, and TomoSAR processing.

\vspace{2ex}\noindent\textbf{Mohammad Sadegh Fazel} received his B.Sc. degree in electrical engineering from Isfahan University of Technology
(IUT), Isfahan, Iran in 1996, and the M.Sc. degree from the University of Tehran, Iran, in 1998, and the
Ph.D. degree from Institute for Communication Systems, University of Surrey, UK in 2010. Since 2011, he is with the Electrical Engineering department of Isfahan University of Technology (IUT). His current research interests include 6G wireless communications, massive MIMO, cooperative relaying, and spectral efficiency.

\vspace{2ex}\noindent\textbf{Mojtaba Beheshti} received the B.Sc. degree from the Isfahan University of Technology (IUT), Iran, the M.Sc. degree from the University of Tehran, and the Ph.D. degree from IUT in 1996, 1999, and 2011 respectively, all in electrical engineering. Currently, he is an assistant professor with the Information and Communication Technology Institute of IUT. His research interests include signal processing for synthetic aperture radar, multicarrier and Massive MIMO systems.

\vspace{2ex}\noindent\textbf{Ehsan Yazdian} received his B.Sc. degree in electrical engineering from Isfahan University of Technology
(IUT), Isfahan, Iran in 2004, and the M.Sc. and Ph.D.
degree in electrical engineering from the Sharif University of Technology, Tehran, Iran, in 2006 and 2012,
respectively. Since 2013, he is with the Electrical Engineering department of Isfahan University of Technology (IUT). His research interests are in the areas of statistical array signal processing, wireless communications, digital communication systems, and software-defined radio.

\end{spacing}
\end{document}